\documentclass{sig-alternate}

\usepackage{algorithmicx}
\usepackage[ruled]{algorithm}
\usepackage{algcompatible}
\usepackage[noend]{algpseudocode}

\usepackage{mathtools}
\DeclarePairedDelimiter{\ceil}{\lceil}{\rceil}

\usepackage{graphicx}
\usepackage{balance}  
\usepackage{url}
\usepackage{booktabs}
\usepackage{color}
\usepackage{xspace}
\usepackage{listings}
\usepackage{synttree}
\usepackage{enumitem}
\usepackage{subfigure}
\usepackage{comment}

\setlist[itemize]{noitemsep,nolistsep}
\setlist[enumerate]{noitemsep,nolistsep}
\setlength{\textfloatsep}{10pt plus 1.0pt minus 2.0pt}

\usepackage[amsmath]{ntheorem}

\newtheorem{theorem}{Theorem}
\newtheorem{lemma}[theorem]{Lemma}
\newtheorem{proposition}[theorem]{Proposition}
\newtheorem{corollary}[theorem]{Corollary}

\theorembodyfont{\upshape}
\theoremstyle{definition}
\newtheorem{definition}[theorem]{Definition}
\newtheorem{example}[theorem]{Example}

\theoremstyle{remark}

\lstset{language=C,
        basicstyle=\fontfamily{pcr}\fontseries{m}\fontsize{9}{10}\selectfont,
        commentstyle=\normalfont,
        columns=fullflexible,
        literate=
        {:-}{{$\gets\;$}}1
        {longspace}{$\quad$}1
        {space}{$\;$}1
        {mathK}{$k$}1
        {mathN}{$N$}1
        {<-}{$\gets$}1
        {negspace}{\vspace*{5cm}}1
        ,
        keywords={yield,if,while,goto, for, leapfrog_next, leapfrog_init, leapfrog_search,
        leapfrog_seek, leapfrog_atEnd, leapfrog_value, else, while, return,globals,
        lftj_open, lftj_up, lftj_value, lftj_next, lftj_seek, lftj_atEnd},
        frame=none,
        morecomment=[l]{//},
        escapechar=\%,
        showstringspaces=false
       }

\makeatletter
\newcommand{\shorteq}{%
  \settowidth{\@tempdima}{x}
  \resizebox{\@tempdima}{\height}{=}
}

\newcommand{\mypara}[1]{\vspace{0.5ex}\noindent\textbf{#1.}}
\newcommand{\myparaND}[1]{\vspace{0.5ex}\noindent\textbf{#1}}
\newcommand{\myparaNN}{\vspace{0.5ex}\noindent}
\newcommand{\TODO}[1]{\relax}
\newcommand{\set}[1]{\ensuremath{\{#1\}}}
\newcommand{\st}{\ensuremath{\,|\,}}
\newcommand{\directed}[1]{\ensuremath{{#1}^\star}}

\newcommand{\Figref}[1]{Fig.~\ref{#1}\xspace}
\newcommand{\Algref}[1]{Algorithm~\ref{#1}\xspace}
\newcommand{\appref}[1]{Apx.~\ref{#1}\xspace}
\newcommand{\Secref}[1]{Section~\ref{#1}\xspace}
\newcommand{\lemmaref}[1]{Lemma~\ref{#1}\xspace}
\newcommand{\propref}[1]{Prop.~\ref{#1}\xspace}
\newcommand{\theoremref}[1]{Theorem~\ref{#1}\xspace}
\newcommand{\defref}[1]{Def.~\ref{#1}\xspace}

\newcommand{\arity}{\ensuremath{a}}
\newcommand{\lftj}{\ensuremath{\textnormal{LFTJ}}\xspace}
\newcommand{\lftjtri}{\ensuremath{\textnormal{LFTJ-}\Delta}\xspace}
\newcommand{\lftjtribf}{\ensuremath{\textnormal{\textbf{LFTJ-}}\mathbf{\Delta}}\xspace}
\newcommand{\interval}[2]{\ensuremath{#1\cdot\!\cdot\!\cdot#2}}
\newcommand{\boxInterval}[6]{\ensuremath{[\interval{#1}{#2},\interval{#3}{#4},\interval{#5}{#6}]}}
\newcommand{\defEq}{\ensuremath{:\!\!\shorteq}\,}

\newcommand{\maxMem}{\ensuremath{\textnormal{mem}_{max}}}
\newcommand{\Tuple}{\ensuremath{n\mathsf{\textnormal{-}\!Tuple}}\xspace}
\newcommand{\TupleK}{\ensuremath{k\mathsf{\textnormal{-}\!Tuple}}\xspace}
\newcommand{\TupleM}{\ensuremath{m\mathsf{\textnormal{-}\!Tuple}}\xspace}

\newcommand{\ArrayOf}{\ensuremath{\mathsf{Array\;of}}\xspace}
\newcommand{\Int}{\ensuremath{\mathsf{int}}\xspace}
\newcommand{\AtomSet}{\ensuremath{\mathsf{AtomSet}}\xspace}
\newcommand{\SliceSet}{\ensuremath{\mathsf{SliceSet}}\xspace}
\newcommand{\LinIters}{\ensuremath{\mathsf{LinearIterators}}\xspace}
\newcommand{\LeapFrogJoinSet}{\ensuremath{\mathsf{Set\,of\,LeapfrogJoins}}\xspace}
\newcommand{\Bool}{\ensuremath{\mathsf{bool}}\xspace}

\newcommand{\gLow}{\ensuremath{\textnormal{low}}\xspace}
\newcommand{\gHigh}{\ensuremath{\textnormal{high}}\xspace}

\newcommand{\gLowAt}[1]{\ensuremath{\textnormal{low}[#1]}\xspace}
\newcommand{\gHighAt}[1]{\ensuremath{\textnormal{high}[#1]}\xspace}

\newcommand{\Iters}[1]{\ensuremath{\textnormal{Iters[#1]}}\xspace}
\newcommand{\Lfs}[1]{\ensuremath{\textnormal{Lfs[#1]}}\xspace}
\newcommand{\AtEnd}{\ensuremath{\textnormal{atEnd}}\xspace}
\newcommand{\MaxVal}{\ensuremath{\textnormal{max\_value}}\xspace}
\newcommand{\MinVal}{\ensuremath{\textnormal{min\_value}}\xspace}

\newcommand{\True}{\ensuremath{\textnormal{\texttt{true}}}\xspace}
\newcommand{\False}{\ensuremath{\textnormal{\texttt{false}}}\xspace}

\newcommand{\PVar}{\ensuremath{\textnormal{$i$}}}
\newcommand{\DVar}{\ensuremath{\textnormal{$d$}}}
\newcommand{\IterVar}{\ensuremath{\textnormal{$iter$}}}

\newcommand{\NegInfTuple}{\ensuremath{[-\infty,...,-\infty]}\xspace}
\newcommand{\InfTuple}{\ensuremath{[\infty,...,\infty]}\xspace}
\newcommand{\Gets}{\ensuremath{\gets}\xspace}
\newcommand{\Inf}{\ensuremath{\infty}\xspace}
\newcommand{\NegInf}{\ensuremath{-\infty}\xspace}

\newcommand{\Successor}[1]{\ensuremath{\mathsf{succ}(#1)}}

\newcommand{\BudgetVar}[1]{\ensuremath{\textnormal{budget}[#1]}\xspace}
\newcommand{\Preds}[1]{\ensuremath{\textnormal{atoms}[#1]}\xspace}
\newcommand{\PredsT}[1]{\ensuremath{\textnormal{\texttt{atoms}}[#1]}\xspace}
\newcommand{\SpillsT}[1]{\ensuremath{\textnormal{\texttt{spill}}[#1]}\xspace}
\newcommand{\Spills}[1]{\ensuremath{\textnormal{spill}[#1]}\xspace}
\newcommand{\Slices}[1]{\ensuremath{\textnormal{S}[#1]}\xspace}

\newcommand{\Budget}{\ensuremath{\textnormal{mem}}\xspace}
\newcommand{\UsedMemAccu}{\ensuremath{\textnormal{usedMem}}\xspace}
\newcommand{\LocalSlice}{\ensuremath{\textnormal{slice}}\xspace}
\newcommand{\LocalMem}{\ensuremath{\textnormal{m}}\xspace}
\newcommand{\LocalAtoms}{\ensuremath{\textnormal{atms}}\xspace}

\newcommand{\Var}{\ensuremath{\textnormal{$i$}}\xspace}
\newcommand{\LeftOverMem}{\ensuremath{\textnormal{leftoverMem}}\xspace}
\newcommand{\LeftOverMemT}{\ensuremath{\textnormal{\texttt{leftoverMem}}}\xspace}

\newcommand{\LoopNums}[1]{\ensuremath{L_{#1}}\xspace}
\newcommand{\BoxNums}[1]{\ensuremath{B_{#1}}\xspace}

\newcommand{\Method}[1]{\Call{#1}{}\xspace}
\newcommand{\Nothing}{\,$\!$}
\newcommand{\CallParens}[1]{\Call{#1}{\Nothing}\xspace}

\newcommand{\BoxLevel}{BoxUp\xspace}
\newcommand{\TrieSlice}[4]{\ensuremath{{#1}^{#2}_{#3\rightarrow #4}}}

\relax
\boilerplate={\raisebox{0.7em}{$\star$} 
The following is ``work-in-progress'', yet has some promising results I am happy to share.
The results I am most proud of are: (1) Achieved performance in absolute numbers on large datasets. It is very exciting that these are achieved \emph{not} by a specialized triangle-listing algorithm but by a \emph{general-purpose join} algorithm. (2) The theoretical results about the in-memory LFTJ and its optimiality for input instance classes characterized by their arboricity (Thms.\,17 \& 19). (3) The simplicity of the proposed out-of-core technique and the elegance for subsetting TrieArrays. While the achieved I/O complexity for the out-of-core technique can likely be improved, the experiments show that this is not the bottleneck.}
\copyrightetc{}

\begin{document}
\conferenceinfo{}{}
%
%

%
\title{General-Purpose Join Algorithms\\for Listing Triangles in Large Graphs}
\subtitle{Early Research Report\raisebox{0.7em}{${\star}$}}

\numberofauthors{1} 
\author{
\alignauthor
Daniel Zinn\\LogicBlox, Inc.
}
\date{30 July 1999}

\maketitle
\begin{abstract} 
We investigate applying general-purpose join algorithms to the triangle listing problem in an out-of-core context.
In particular, we focus on Leapfrog Triejoin (LFTJ) by Veldhuizen\cite{Veldhuizen14}, a recently proposed, worst-case optimal algorithm. 
We present ``boxing'': a novel, yet conceptually simple, approach for feeding input data to LFTJ. Our extensive analysis shows that this
approach is I/O efficient, being worst-case optimal (in a certain sense). Furthermore, if input data is only a constant factor larger than the available memory, then a boxed LFTJ essentially maintains the CPU data-complexity of the vanilla LFTJ. Next, focusing on LFTJ applied to the 
triangle query, we show that for many graphs boxed LFTJ matches the I/O complexity of the recently by Hu, Tao and Yufei proposed specialized 
algorithm MGT \cite{hu2013massive} for listing tiangles in an out-of-core setting. We also strengthen the analysis of LFTJ's computational complexity for the triangle query by considering families of input graphs that are characterized not only by the number of edges but also by a measure of their density. 
E.g., we show that LFTJ achieves a CPU complexity of $O(|E|\log|E|)$ for planar graphs, while on general graphs, no 
algorithm can be faster than $O(|E|^{1.5})$. Finally, we perform an experimental evaluation for the 
triangle listing problem confirming our theoretical results and showing the overall effectiveness of our approach.
On all our real-world and synthetic data sets (some of which containing more than 1.2 billion edges) 
LFTJ in single-threaded mode is within a factor of $3$ of the specialized MGT; a penalty that---as 
we demonstrate---can be alleviated by parallelization.
\end{abstract}
\category{H.2.4}{Systems}{Query processing, Parallel databases}
\keywords{external memory, triangle, relational joins, worst-case optimal, Leapfrog Triejoin}

\newpage
\section{Introduction}

Hu, Tao, and Yufei \cite{hu2013massive} recently proposed a novel algorithm (MGT) for listing triangles in large graphs that is 
both I/O and CPU efficient; and also outperforms existing competitors by an order of magnitude.
At the same time, there has been exciting theoretical research that
shows it is possible to design so-called worst-case optimal join algorithms \cite{atserias2008size,Veldhuizen14,ngo2014beyond,ngo2012worst}. 
This begs the question: How would general-purpose join algorithms compare
to the best specialized triangle-listing algorithms in a setting where not all data 
fits into main memory? 

This question is motivated by the desire of building \emph{general-purpose} systems that can empower
their (domain) users to pose and run queries in a \emph{declarative} and general language, such as SQL
or Datalog---a goal that likely is little controversial. We focus on the out-of-core setting not only  
because of the obvious reasons of input or intermediary data not fitting in main memory, but also 
because we like to utilize graphics processing cards (GPUs) as high-throughput co-processors 
during query evaluation. GPU memory
is currently limited to up to around 12GB \cite{NVIDIADevZone}, highlighting the urgency for robust out-of-core techniques.

The triangle listing is the basic building block for many other graph algorithms and key ingredient for graph metrics such as triangular clustering, finding cohesive subgraphs etc \cite{hu2013massive, raymond2002rascal, rhodes2003clip, palla2005uncovering}.
In addition, it has gotten extensive attention in the research literature among several fields: graph theory, databases and network analysis to name a few. Here, both in-memory as well as in an out-of-core algorithms have been studied.
Having a \emph{general-purpose} technique being able to compete with the best-in-class \emph{hand-crafted}
algorithms that are \emph{specific} for triangle listing, would indeed, be very good news for the database 
community advocating high-level, declarative query languages.

We selected Leapfrog-Triejoin (LFTJ) by Veldhuizen as the general-purpose join algorithm for our study. This is for various reasons: 
(1) its elegance allows efficient implementations with various optimizations, (2) by nature, LFTJ only uses
$O(1)$ intermediary data, making it a very good candidate in the out-of-core context, and (3) because of
its strong theoretical worst-case guarantees \cite{Veldhuizen14}. LFTJ's worst-case guarantee in its generality is technical \cite{Veldhuizen14}. Roughly, it guarantees that for a given query and input $I$, LFTJ will never perform asymptotically more steps (up to a log-factor) than what are strictly necessary for any correct algorithm on inputs $I'$ that are \emph{similar} to $I$. Here, similar means that, eg., the sizes of the input relations cannot change nor can certain other statistics of the data. 

\mypara{Model \& Assumptions} 
We restrict 
our attention to full-conjunctive queries, and use a Datalog syntax and terminology to describe queries (or joins).
Our formal setting is the standard one for considering I/O efficient algorithms: Input, intermediary and output
data can exceed the amount of available main memory $M$ (measured in words to store one atomic value), 
in which case it can be read (written) from (to) secondary
storage with the granularity of a block that has size $B$. Reading or writing a block incurs 1 unit of I/O cost.
For I/O and CPU cost, we consider data complexities, that is we assume the query to be fixed and small. In particular,
we like $M/B$ to be larger than, say 10 times, the number of atoms multiplied by their maximum arity. Furthermore,
to simplify complexity results, we assume that $|I|/B$ is larger than $\log|I|$. This restriction is mostly theoretic: Using a block size of 
64KiB with a 64-bit word-width, inputs only need to be larger than 15MiB to satisfy the requirement. 
With these assumptions in mind, we make the following contributions:
\subsubsection*{Contributions} 
\mypara{Boxing LFTJ}
        We present and analyze a novel strategy we call \emph{boxing} for out-of-core execution of a multi-predicate, 
        worst-case optimal join algorithm (Leapfrog-Triejoin). This method exhibits the following properties:
        
        \textbf{(1)} 
        For queries with $n$ variables, executing on input data $I$ and producing 
        output data of size $K$, boxed LFTJ requires at most $O(|I|^n/(M^{n-1} B) + K/B)$ block I/Os. We show that this bound is worst-case 
        optimal, in the sense that for any $n$, we can construct a query such that no algorithm can have an asymptotically better 
        bound with respect to $I$ and $K$. 

        \textbf{(2)} 
        We further show that if the input data exhibits limited skew (in the sense we will make precise) then boxed LFTJ requires 
        only $O(|I|^r/(M^{r-1} B) + K/B)$ I/Os. Here, $r$ denotes the \emph{rank} of the query---a property we will 
        define. The rank of a query never exceeds the number of variables used in the Datalog body, and is often lower.

        \textbf{(3)} 
        We also analyze the computational complexity of boxed \lftj. Here,
        we show that if the input size $|I|$ is only a constant factor larger than the
        available memory $M$, then the asymptotic CPU work performed by the boxed \lftj 
        (essentially)\footnote{Except when the in-memory \lftj's complexity is in $o(|I|)$, in which case the boxed
        version's complexity is $O(|I|)$.} matches the asymptotic complexity of the in-memory \lftj maintaining its
        theoretical guarantees.

\mypara{Boxed \lftjtribf}
        We apply boxing to the triangle-listing problem. 
        Here, the input graph exhibits limited-skew if the degree of its nodes 
        is limited by $M/9$. With 100GiB of main memory this allows graphs containing 
        nodes with up to 1.3 million neighbors. 

        On such graphs, our approach requires $O(|I|^2/(MB) + K/B)$ block I/Os, 
        matching the asymptotic I/O bound of the recently presented 
        specialized algorithm MGT \cite{hu2013massive} for triangle listing.

\mypara{In-memory \lftjtribf}
        We also tighten the analysis for the CPU complexity of the conventional in-memory LFTJ applied 
        to the triangle listing query with non-trivial arguments.
        It is easy to see that \lftjtri's achieved asymptotic complexity of $O(|E|^{1.5}\log|E|)$ is 
        worst-case optimal modulo the log-factor. We improve on this result in two ways:

        \textbf{(1)}
        We show that 
        for graphs $G=(V,E)$ with an arboricity $\alpha(|E|)$, \lftjtri requires
        $O(|E|\alpha(|E|)\log|E|)$ work. A graph's arboricity is a measure of its density (as we will explain later)
        which never exceeds $O(\sqrt{|E|})$. 
        Moreover, $\alpha$ is substantially smaller for many graphs
        \cite{chiba1985arboricity,lin2012arboricity};
        for example, for both planar graphs and graphs with fixed maximum degree, $\alpha \in O(1)$.
        As a corollary, we thus obtain that \lftjtri runs in $O(|E|\log|E|)$ on planar graphs.

        \textbf{(2)}
        We further improve on the worst-case optimality analysis: We show that even if we are only interested in
        \emph{families of graphs} for which their arboricity is limited by any function $\hat\alpha \in o(\sqrt{|E|})$, e.g.\,by $30\log|E|$,
        and we would like to design a specialized algorithm that (only) works (well) on these graphs, then this
        algorithm cannot have an asymptotic complexity that is in $o(\hat\alpha(|E|)|E|)$. This result shows
        that \lftjtri is worst-case optimal for any of these families (modulo the log-factor).

\mypara{Evalation} We further present an experimental evaluation, where we focus on the triangle query.
We confirm that the boxing technique works well, especially when the input data is only a constant factor larger than the 
available memory: on real-world and synthetic graphs with each more than 1.2 billion nodes, boxing only introduces little CPU 
overhead; and has good performance even when only limited main memory is available. We also compare the raw performance against
two competitors: a specialized \cite{schank2007algorithmic} C++-based implementation in the graph-processing system Graphlab 
\cite{Low2012GraphLab} and the specialized triangle listing algorithm MGT \cite{hu2013massive}. LFTJ is about 65\%
the Graphlab implementation, yet scales to larger data sizes. When running single-threaded, LFTJ is on average 3x slower than 
MGT. Our parallelized version of LFTJ, however, is slightly faster than the single-threaded MGT (about 30\%
main memory is restricted to as much as 10\%

The rest of the paper is structured as follows: \Secref{SecAssumptions} reviews the relevant background information.
We present and analyze the boxing strategy for LFTJ in \Secref{SecAutoBoxing}. \Secref{SecLFTJTriangle} analyzes the in-memory and the boxed variant of LFTJ applied to the triangle query. \Secref{SecImpl} highlights some important aspects of our implementation, before we experimentally evaluate our approaches in \Secref{SecExperiments}. We review related work in \Secref{SecRelatedWork} and conclude in \Secref{SecConclusion}. 

\section{Background} \label{SecAssumptions}
\subsection{Review: Leapfrog-Triejoin (LFTJ) \cite{Veldhuizen14}}

LFTJ \cite{Veldhuizen14} is a \emph{multi-predicate} join algorithm. Unlike traditional \emph{binary} join algorithms such as Hash-Join or Sort-Merge-Join which take two relations as input, LFTJ takes as input $n$ relations together with the join conditions. 

\begin{example}[LFTJ] 
Consider the query:
\[ 
   Q(x,y,z) \leftarrow R(x,y), S(x,z), T(y,z).
\]
With binary joins, we first join, e.g., $R(x,y)$ with $S(x,z)$ to obtain $RS(x,y,z)$ and then join $RS(x,y,z)$ with $T(y,z)$ to obtain $Q(x,y,z)$. LFTJ, on the other hand directly computes $Q(x,y,z)$ given \emph{all} predicates in the body of the rule as an input without storing \emph{any sizeable} intermediary results. 

Some notation is necessary: for a binary relation $R(x,y)$, let $R(x,\_)$ denote the set of values in the first column, i.e., $R$ projected to its first attribute; further let $R_a(y)$ denote the projection to the second attribute \emph{after} only selecting tuples that have the constant $a$ as the first attribute. 

LFTJ operates by first fixing an order of the variables occuring in the rule body. In our example, we might pick $x,y,z$ as the order. Then, LFTJ finds all possible values $a$ for the variable $x$. This is done by performing an intersection of $R(x,\_)$ and $S(x,\_)$, i.e., the first column of $R$ with the first column of $S$ because the variable $x$ occurs in these atoms. Now, as soon as the first of such $a$ is found, LFTJ is looking for values $b$ of $y$, the next variable in the variable-order. Here, LFTJ computes the intersection of $R_a(y)$ and $T(y,\_)$. 
Again, as soon as the first of such $b$ is found, LFTJ is looking for values $c$ for $z$ by computing the intersection of $S_a(z)$ and $T_b(z)$. If any of these $c$ is found LFTJ reports the tuple $(a,b,c)$ in the output. Once the intersection of $S_a(z)$ and $T_b(z)$ has been computed, LFTJ back-tracks its search to the variable $y$ and looks for the next $b'$. Back-tracking continues up to the first variable and LFTJ finishes when no new $a'$ can be found anymore. %
A key to LFTJ's performance is to efficiently compute the various intersections. This is achieved via the method of a Leapfrog join (LFJ) which, as we detail below, leverages that relations are pre-sorted.

\end{example}

\mypara{Trie representation for relations}
\begin{figure}
\centering
\mbox{%
          \subfigure[\label{FigTrieTrieArrayA}$R$]{ \includegraphics[scale=0.78]{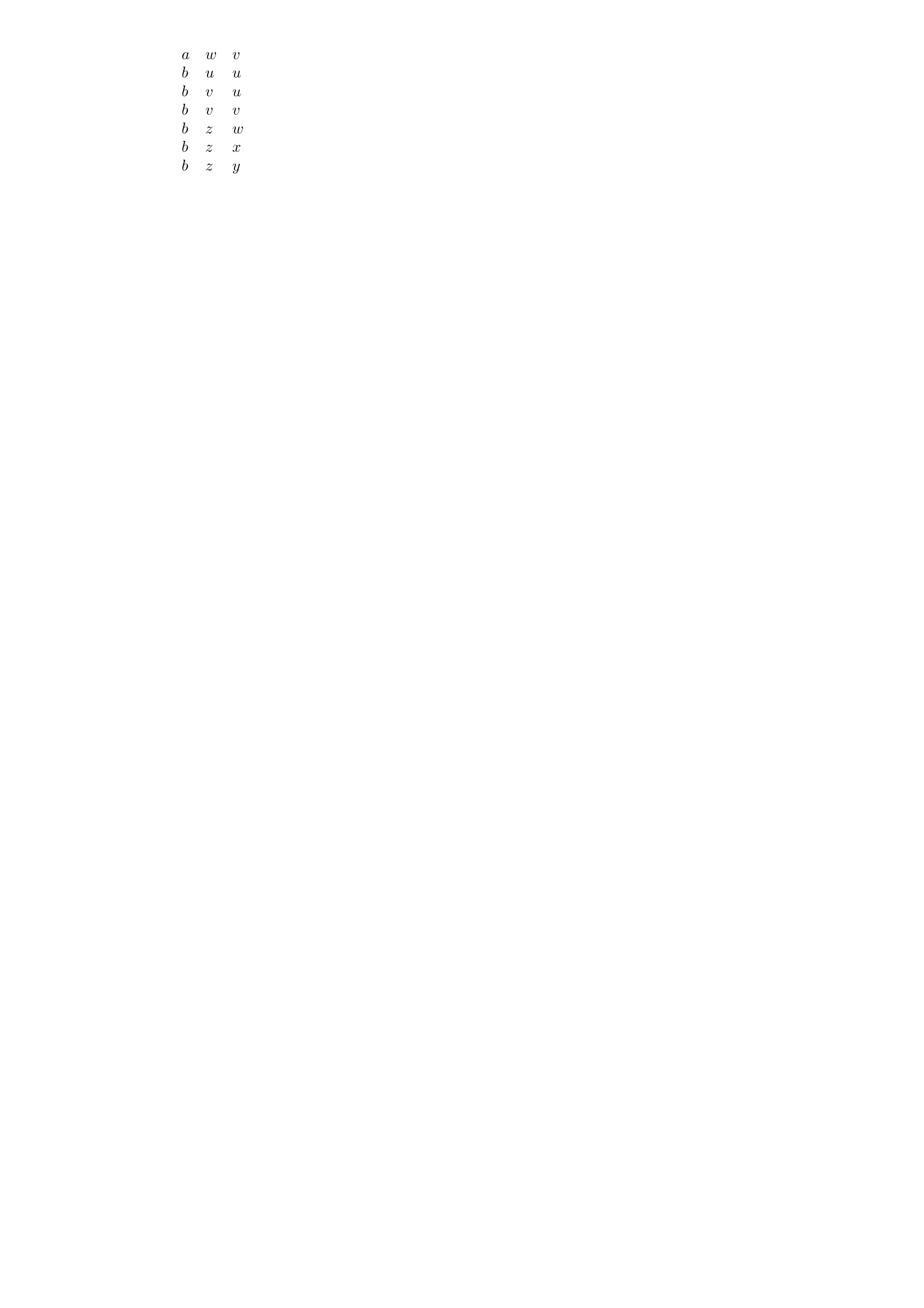} }
          \subfigure[\label{FigTrieTrieArrayB}Trie of $R$]{     \includegraphics[scale=0.78]{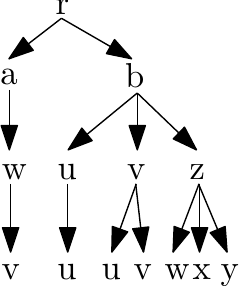} }
          \subfigure[\label{FigTrieTrieArrayC}TrieArray of $R$]{\includegraphics[scale=0.78]{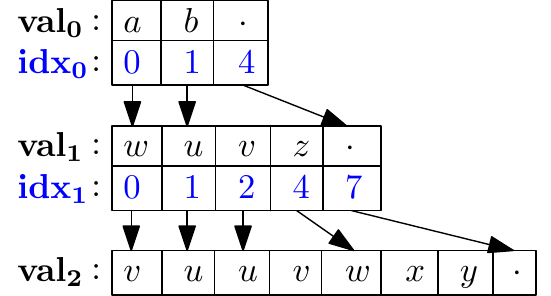} }
}
\vspace*{-4mm}
\caption{\label{FigTrieTrieArray} Trie and TrieArray of a ternary relation $R$}
\end{figure}
It is convenient to think of relations to be represented as a Trie\footnote{also called prefix tree, radix tree or digital tree}. A Trie is a tree that stores each tuple of a relation as a path from the root node to a child node. See \Figref{FigTrieTrieArrayA} for an example of a ternary relation with its trie in \Figref{FigTrieTrieArrayB}. In general, a Trie for a relation with arity $\arity$ has a height of $\arity$. For a relation $R(x_1,\dots,x_\arity)$, the nodes at height $i$ store values from the $i$th column of $R$. We require that children of the \emph{same} node $n$ are unique and ordered increasingly. For example in \Figref{FigTrieTrieArrayB} at level 2, the children of b are the values u, v, and w, which are in increasing order.

\mypara{TrieIterators} LFTJ accesses relational data not directly but via a \emph{TrieIterator} interface. This not only allows various storage schemes\footnote{e.g., regular B+-Trees, sorted list of tuples, or the TrieArrays we describe later} but also facilitates uniform handling of ``infinite'' predicates such as \texttt{Equal}, \texttt{SmallerThan} or \texttt{Plus}. The TrieIterator interface provides methods to navigate the Trie of a relation. It can be thought of as a pointer to a node in the Trie. The detailed methods for Trie navigation are given in \appref{AppTrieNavigation}. The methods are \Method{value()} to access a data value; \Method{open()} and \Method{close()} to move up and down in the trie. The \emph{linear iterator} methods \Method{next}, \Method{seek}, and \Method{atEnd} are used to move within \emph{unary} ``sub-relations'' $A$ such as $R(x,\_)$ or $R_a(y)$. Here, \Method{next} moves one step right and
\Method{seek()} is used to
forward-position the iterator to the element with value $v$; if $v$ is not in $A$, then the iterator is placed at the element with the smallest value $w > v$. In general, if the iterator passes the end of the represented relation such as $R_a(y)$, the \Method{atEnd} will return \texttt{true}. A key to good LFTJ performance is that back-end data structures efficiently support these TrieIterator operations. In fact, the theoretical guarantees given by LFTJ require that \Method{value}, \Method{key}, \Method{atEnd} have complexity $O(1)$. Furthermore, \Method{seek} and \Method{next} must not take longer than $O(\log N)$ individually and must have an armortized cost of at most $O(1+\log(N/m))$ if $m$ keys are visited. Here, $N$ stands for the size of the \emph{unary} relation the iterator is for; eg, $R_a(y)$.

\label{SecLFTJcomplexityrequirement}

\mypara{Leapfrog Join} A basic building block of Leapfrog Triejoin (LFTJ) is Leapfrog join (LFJ). It computes the intersection of multiple unary relations. For this, LFJ has a linear iterator for each of its input relations. An execution of LFJ is reminiscent of the merge-phase of a merge-sort; however instead of returning values that are in \emph{any} of the inputs, we search and return values that are in \emph{all} input relations. 
To do so efficiently, we use \Method{seek} to iteratively advance the iterator positioned at the relation with the smallest value to the largest value amongst the iterators. If all iterators are placed on the same value, we have found a value of the intersection. %

Using LFJ to join $n$ relations with $N_{\min}$ and $N_{\max}$ denoting the cardinalities of the smallest and largest relation, respectively, has the following complexity:
\begin{proposition}[3.1 in \cite{Veldhuizen14}]
The running time of Leapfrog join is $O(N_{\min}\log(N_{\max}/N_{\min}))$. 
\end{proposition}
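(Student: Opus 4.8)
The plan is to bound separately (i) the number of iterations of the main loop of Leapfrog join and (ii) the cost each iterator incurs over those iterations, and then to combine the two using the amortized iterator guarantee stated above. First I would fix the invariant that drives the whole argument. Recall that LFJ keeps its $n$ linear iterators in a ring sorted by current \Method{value}, and at each step seeks the iterator holding the current \emph{minimum} value forward to the current \emph{maximum} value; this preserves the sorted ring and advances the active position cyclically, so that within any $n$ consecutive iterations each of the $n$ iterators is touched exactly once. The key observation is that the current maximum strictly increases at \emph{every} iteration: when the seeked iterator lands strictly beyond the old maximum it becomes the new, larger maximum, and when it lands exactly on the old maximum all iterators coincide, a result is emitted, and the subsequent \Method{next} again moves strictly past it. Consequently every touch of a given iterator advances it to a strictly larger key.

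Second, I would use this to bound the iteration count by the smallest relation. Since touches occur round-robin and each touch of the iterator over the smallest relation advances it strictly through its $N_{\min}$ keys, that iterator reaches \Method{atEnd}---and the loop terminates---after at most $N_{\min}+1$ touches. As consecutive touches of a fixed iterator are at most $n$ iterations apart, the loop runs $O(n\,N_{\min})$ times, and each individual iterator performs $m_i \le N_{\min}+1$ \Method{seek}/\Method{next} operations.

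Finally, I would charge the cost. Iterator $i$ ranges over a unary relation of size $N_i \le N_{\max}$ and performs $m_i = O(N_{\min})$ operations, so by the stated amortized guarantee its total cost is $O\!\bigl(m_i(1+\log(N_i/m_i))\bigr)$. Using $m_i \le N_{\min} \le N_i \le N_{\max}$ together with the monotonicity of $m \mapsto m(1+\log(N_i/m))$ on $[0,N_i]$, each such term is $O\!\bigl(N_{\min}(1+\log(N_{\max}/N_{\min}))\bigr)$; summing over the constant number $n$ of iterators yields the claimed running time.

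I would expect the main obstacle to be the second step: making rigorous that the smallest relation alone caps the number of loop iterations. This hinges on combining the round-robin touch pattern with the strict monotonicity of each iterator's keys, and on arguing that termination occurs no later than the exhaustion of the smallest-cardinality iterator. A secondary subtlety is the final cost estimate, where one must invoke the amortized $O(1+\log(N/m))$ bound with the correct $m$ (the number of keys actually \emph{visited}, not the relation size) and treat the near-equal-cardinality regime, where the logarithm degenerates and the additive linear $N_{\min}$ term dominates.
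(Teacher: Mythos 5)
The paper itself offers no proof of this proposition: it is imported verbatim as Prop.~3.1 of \cite{Veldhuizen14}, so there is no in-paper argument to compare against. Your proposal reconstructs the standard argument behind that result, and its overall structure is sound: round-robin touching of the $n$ iterators, strict advancement on every touch, hence at most $N_{\min}+1$ touches of the smallest iterator before \Method{atEnd} terminates the loop and $O(nN_{\min})$ iterations overall; then charging each iterator's $m_i=O(N_{\min})$ operations through the amortized $O(1+\log(N_i/m_i))$ guarantee and using monotonicity of $m\mapsto m(1+\log(N_i/m))$ on $[0,N_i]$. This yields the bound in its honest form $O(N_{\min}(1+\log(N_{\max}/N_{\min})))$, which is what the degenerate case $N_{\min}=N_{\max}$ forces anyway; you are right to flag that regime.

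One intermediate claim is false, although the conclusion you draw from it survives. You justify ``every touch strictly advances the touched iterator'' by arguing that the running maximum strictly increases at every iteration, because ``when [the sought iterator] lands exactly on the old maximum all iterators coincide.'' It does not: with three iterators at values $1,2,5$, the iterator at $1$ is sought to $5$ and may land exactly there, yet the iterator at $2$ still lags, no result is emitted, and the maximum stays at $5$ for that step. The fact you actually need is weaker and holds directly: a seek is only issued under the guard $\MinVal<\MaxVal$, and \Method{seek} forward-positions to the least key $\ge\MaxVal$, hence strictly past the touched iterator's current key; similarly \Method{next} after a match advances strictly. That observation, combined with the cyclic touch order, already caps the touches of the smallest iterator and completes your second step. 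Replace the monotone-maximum detour with this direct argument and the proof goes through.
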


The detailed algorithms for the Leapfrog join as well as LFTJ are given in \appref{AppLFTJAlgos} as reference; for an even more detailed introduction and reference see \cite{Veldhuizen14}.

\mypara{Leapfrog TrieJoin Restrictions} 
LFTJ requires that no variable occurs more than once in a single body atom. This can be achieved via simple rewrites: Given a join with, e.g., the atom $A = R(x,y,x)$ in the body, we introduce a new variable $x'$ and replace $A$ by $R(x,y,x'),\mathtt{Eq}(x,x')$ where $\mathtt{Eq}$ is the infinite equal-relation which itself is represented by a specialized TrieIterator.

As mentioned above, LFTJ is parameterized by an order on the variables of the join. This order is usually chosen by an optimizer as the exact order might influence runtime characteristics and can have an effect on the theoretical bounds for the I/O complexity as we will detail below. Furthermore, the chosen order determines the sort-order of the input relations: In particular, arguments in atoms of the join body must form a subsequence of the chosen order. E.g., consider the order $x,y,z$: body atoms $R(x,z)$ or $S(y)$  are allowed while the atom $T(y,x)$ needs to be replaced by an \emph{alternative index} $T_{2,1}(x,y)$ which is created as $T_{2,1}(x,y) \Gets T(y,x)$. These indexes are created in a pre-processing step.

\subsection{TrieArrays} 
We use a simple array-encoding for Tries, which is inspired by the Compressed-Sparse-Row (CSR) format---a commonly used format to store graphs. As an example see \Figref{FigTrieTrieArrayC} for the representation of the trie given in \Figref{FigTrieTrieArrayB}. The data values are stored in flat arrays called \emph{value}-arrays.
 Index arrays are used to separate children at the same tree level but from different parent nodes. An $n$-ary relation has $n$ value arrays and $n-1$ index arrays. In particular, the children of a node $n$ stored in the value array $\mathbf{val_i}$ at position $j$ are stored in the array $\mathbf{val_{i+1}}$ starting at the index from $\mathbf{idx_i}[j]$ until the index $\mathbf{idx_i}[j+1]$ inclusively. E.g. in \Figref{FigTrieTrieArrayC}, the children $w,x,y$ of $w$ from $\mathbf{val_1}[3]$ are stored in $\mathbf{val_2}$ from $\mathbf{idx_1}[3] = 4$ to $\mathbf{idx_1}[4] = 7$.

To reduce notation, we will often simply identify a relation $R$ with its TrieArray representation and vice versa in the rest of the paper. For example, when we write a $n$-ary TrieArray we mean a TrieArray for an $n$-ary relation $R$.

All TrieIterator operations are trivial to implement for TrieArrays; except possibly seek, where some attention needs to be given to achieve the
required armortized complexity. Here, instead of starting the binary in the middle of the remaining sub-array, we probe with an exponentially increasing lookup sequence of eg., $1,4,16,64,...$ to narrow lower and upper bounds for the binary search. 

While the TrieArray representation is beneficial for execution, it is also fairly cheap to create: 
\begin{proposition} The TrieArray representation of a relation $R$ requires no more than $O(|R|)$ space and can be built in $O(SORT(R))$ time and I/O complexity.
\end{proposition}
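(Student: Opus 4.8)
The plan is to establish both claims through the same two-phase construction: first sort the tuples of $R$ lexicographically according to the chosen column order, then build all arrays in a single left-to-right scan over the sorted tuples. The cost of the sort is by definition $O(SORT(R))$ in both the CPU and I/O models, so it suffices to show (a) that the output occupies $O(|R|)$ space and (b) that the scan adds only $O(|R|)$ additional CPU work and $O(|R|/B)$ I/Os, both of which are dominated by the sort.

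For the space bound I would count entries level by level. A value stored in $\mathbf{val_i}$ corresponds to a distinct length-$i$ prefix of some tuple of $R$; since distinct prefixes of a fixed length are in bijection with distinct nodes at that trie level, and there are at most $|R|$ of them, each of the $\arity$ value arrays has at most $|R|$ entries. Each index array $\mathbf{idx_i}$ carries one boundary per node at level $i$ plus a sentinel, hence also $O(|R|)$ entries. Summing over the $\arity$ value arrays and $\arity-1$ index arrays gives $O(\arity\cdot|R|)$, which is $O(|R|)$ because the arity is a fixed query parameter under the data-complexity assumption.

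For the construction, after sorting I would maintain, for each level $i$, the last prefix emitted into $\mathbf{val_i}$. Scanning the sorted tuples in order, each new tuple shares a prefix of some length with its predecessor; at every level $i$ beyond that shared length a fresh value must be appended to $\mathbf{val_i}$, and the corresponding boundary recorded in $\mathbf{idx_{i-1}}$ so that the children of the parent node occupy exactly the half-open range dictated by the convention of \Figref{FigTrieTrieArrayC}. Because the input is sorted, detecting the shared-prefix length and appending are $O(\arity)=O(1)$ operations per tuple, yielding $O(|R|)$ total CPU time and one streaming pass of $O(|R|/B)$ I/Os. Both are dominated by $SORT(R)$ (indeed the model assumption $|I|/B>\log|I|$ guarantees the scan is never the bottleneck), so the whole construction is $O(SORT(R))$ in time and I/O.

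The main obstacle is not the asymptotics but the bookkeeping of emitting all $\arity$ value arrays and $\arity-1$ index arrays correctly and simultaneously in a single pass: I must argue that tracking only the shared-prefix length between consecutive sorted tuples suffices to place every value and every index boundary, and in particular that closing a node at level $i$ (writing its final boundary and sentinel) is triggered exactly when the prefix of length $i$ changes. Formalizing the loop invariant---that after processing the first $t$ sorted tuples every array is a correct prefix of its final form---and discharging the boundary and sentinel edge cases is where the care is required.
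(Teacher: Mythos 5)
Your proposal is correct and follows essentially the same route as the paper's sketch: sort $R$ lexicographically, then construct the arrays by a linear scan, with the space bound following from the correspondence between array entries and distinct tuple prefixes. The only cosmetic difference is that the paper uses two passes over the sorted data (one to size the arrays, one to fill them) where you use a single appending pass; both give the same $O(SORT(R))$ bound.
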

\textsc{Sketch.} The space requirement is obvious; furthermore the data structure can be built from a lexicographically sorted $R$ in two passes: pass 1 determines the sizes of the value and index arrays, pass 2 fills in data.
$\qed$

\subsection{LFTJ for Computing Triangles}

Given a simple, undirected graph $G$ and let $\directed{G} = (V,E)$ be its directed version, that is for each edge $\set{a,b} \in E(G)$, $E$ contains the pair  $(\min\set{a,b}, \max\set{a,b})$. The query 
\[
  T(x,y,z) \leftarrow {E}(x,y), {E}(x,z), {E}(y,z), x < y < z.
\]
computes all triangles in $\directed{G}$ of the form: 
          \includegraphics[scale=1]{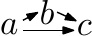}.
The output $T$ coincides with the triangles in $G$. %
Since $x<y<z$ is already implied by the atoms containing the edge relation, we can omit the
inequality from the query obtaining:
\begin{equation}
  T(x,y,z) \leftarrow {E}(x,y), {E}(x,z), {E}(y,z). \tag{\ensuremath{\Delta}}
  \label{EQtriangleQuery}
\end{equation}

\section{Boxing LFTJ} \label{SecOOC} \label{SecAutoBoxing}

We first motivate our strategy by showing that LFTJ can suffer from excessive I/O operations in an external-memory setting with a block-based least-recently-used memory replacement strategy. As example, we use the triangle query with specifically crafted input graphs.

\mypara{LFTJ on the triangle query}
It is useful to highlight the steps that LFTJ performs for the triangle query \eqref{EQtriangleQuery}. 
These are summarized in \Algref{AlgFLTJTriangle}. Note that \Algref{AlgFLTJTriangle} is \emph{not}
the pseudo-code of the program we use to list triangles; it only summarizes the steps LFTJ 
performs when run on the triangle query. First, the leapfrog join at level $x$ 
for the atoms $E(x,y)$ and $E(x,z)$ computes the intersection between $E(x,\_) = V_1$ and $E(x,\_) = V_1$. 
Then, for each found value $a$ for $x$, we perform a leapfrog-join at level $y$ computing the intersection of $E_a(y) = D(a)$ with $V_1 = E(y,\_)$, because the variable $y$ occurs in the atoms $E(x,y)$ and $E(y,z)$. In the last step, we find bindings for $z$
by intersecting $D(a) = E_a(z)$ with $D(b) = E_b(z)$ because $z$ occurs in the atoms $E(x,z)$ and $E(y,z)$.

\begin{algorithm}[t]
\caption{\label{AlgFLTJTriangle} Steps Performed by Leapfrog-Triejoin on the Triangles Query $T(x,y,z) \leftarrow E(x,y), E(x,z), E(y,z)$.}
\begin{algorithmic}[1]
   \For{$a \in V_1 \cap V_1$}               \Comment{$V_1 \defEq \set{x \st (x,v) \in E}$}
      \For{$b \in V_1 \cap D(a)$}  \Comment{$D(v) \defEq \set{x \st (v,x) \in E}$}
         \For{$c \in D(a) \cap D(b)$}
           \State \textbf{yield} $(a,b,c)$  \Comment{triangle found}
         \EndFor
      \EndFor
   \EndFor
\end{algorithmic}
\end{algorithm}

\mypara{Example inputs that causing excessive I/O} 
For $N \ge M+B$, consider the graph $G_N=(V,E)$ with edges $E$ as:
\[
   E = \set{ (x,y) \st x = 0,\dots,N \textnormal{ and } y = N - B(x \textnormal{ mod } T) }
\]
where $T = M/B + 1$ being slightly larger than the number blocks fitting into main memory at once.
See \Figref{FigThreshingRel} in the appendix for an example with $N=24$, $M=20$, $B=4$, and $T=6$. The key idea is that we place values in the second column of $E$ by $B$ apart which will cause LFTJ to perform an I/O for every tuple in $E$ for step 3 in \Algref{AlgFLTJTriangle}; furthermore, we make sure values in the second column repeat in groups large enough that loading all blocks in a group will preempt the first block from memory effectively prohibiting the algorithm to reuse the earlier loaded blocks. 

\begin{proposition} \label{PropIOs}
\lftjtri incurs at least $2|E(G_N)|$ I/Os for the above defined graph $G_N$ with a TrieArray data representation and a LRU memory replacement strategy. 
\end{proposition}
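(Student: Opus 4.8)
The plan is to express the cost of \Algref{AlgFLTJTriangle}'s inner loop at the granularity of individual TrieArray accesses and then show that a cyclic, cache-overflowing access pattern forces a miss on (essentially) every iteration. I first record the layout of $E$'s TrieArray. Since every node $x\in\{0,\dots,N\}$ has exactly one out-edge, the first-column value array is the identity $\mathbf{val_1}=[0,1,\dots,N]$, the index array is $\mathbf{idx_1}=[0,1,\dots,N+1]$, and the child array satisfies $\mathbf{val_2}[x]=N-B(x\bmod T)$; in particular $D(x)$ is the single entry at position $x$ of $\mathbf{val_2}$, and $|E(G_N)|=N+1$.

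Next I would trace the three nested loops. Because $V_1=\{0,\dots,N\}$ we have $V_1\cap V_1=V_1$, so the outer loop runs $a=0,\dots,N$; for each $a$ the middle loop finds the unique $b=N-B(a\bmod T)\in V_1\cap D(a)$; and the inner loop then opens the trie node $b$ in the iterator for $E(y,z)$ to intersect $D(a)$ with $D(b)$. Opening node $b$ touches $\mathbf{idx_1}$ at position $b$ (to locate the child range) and $\mathbf{val_2}$ at position $b$ (to read the child value). The key point is that, as $a$ ranges over $T$ consecutive values, $b$ cycles through $N,N-B,\dots,N-M$; since these positions are $B$ apart and a block stores $B$ consecutive entries, they lie in $T$ \emph{distinct} consecutive blocks of $\mathbf{idx_1}$ and, likewise, of $\mathbf{val_2}$. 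This pattern then repeats with period $T$.

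The crux is an LRU-thrashing lemma. By construction $T=M/B+1$, exactly one more than the number $M/B$ of blocks that fit in memory. Hence, between two consecutive opens of node $b$ landing in the same high block (one period apart), the iterator opens the other $T-1=M/B$ high blocks; accessing $M/B$ distinct blocks makes each of them more recently used than the block in question, so afterwards that block is no longer among the $M/B$ most-recently-used and has been evicted. Consequently every open of node $b$ is a miss, both in $\mathbf{idx_1}$ and in $\mathbf{val_2}$; summing the two misses over the $N+1$ inner-loop iterations yields at least $2(N+1)=2|E(G_N)|$ block I/Os.

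The step I expect to be the main obstacle is making this lemma airtight in the presence of a \emph{second}, interfering stream: opening node $a$ (done when establishing $D(a)$) advances a monotone pointer through positions $0,1,\dots,N$ of $\mathbf{idx_1}$ and $\mathbf{val_2}$, and once $a$ approaches $N$ this pointer visits the very high blocks that the cyclic $b$-pointer thrashes, so a low open could refresh a high block and rescue it from eviction. I would bound this interference by observing that the window of $a$-values for which opening node $a$ touches a fixed high block spans only $B<T$ consecutive $a$'s, hence overlaps that block's schedule of high opens in $O(1)$ periods; the total number of rescued opens is therefore $O(M/B)$ per array, which is dominated by the slack available once one also counts the analogous thrashing of the $\Method{seek}$ into $\mathbf{val_1}$ during the middle loop (a third independent stream at the same high positions, legitimate since $N\ge M+B$ guarantees $N+1\gg M/B$). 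Thus the clean lower bound $2|E(G_N)|$ survives; the remaining work is the careful bookkeeping of the LRU stack across this boundary region and verifying the capacity arithmetic $M/B=T-1$.
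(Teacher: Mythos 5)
Your argument is essentially the paper's own proof (which is itself only a sketch): exploit the $B$-spacing of the second-column values so that successive lookups land in distinct blocks, and the period-$T$ cycling with $T = M/B+1$ to defeat an LRU cache of capacity $M/B$, yielding two misses per edge and hence $2|E(G_N)|$ I/Os. The only differences are in the bookkeeping --- the paper charges one miss to the level-$y$ lookup of $b$ in the first column and one to loading $D(b)$ at level $z$, whereas you charge both to the level-$z$ open (one in $\mathbf{idx_1}$, one in $\mathbf{val_2}$) and keep the first-column stream as slack --- and you are in fact more careful than the paper, which never addresses the possibility that the monotone $a$-stream refreshes high blocks near the end of the run.
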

\begin{proof} See \appref{AppProofIOExample}.
\end{proof}

\subsection{High-Level Idea}

\begin{figure}
\centering
          \subfigure[\label{FigExampleGraphA}Graph $G$]{\includegraphics[scale=0.9]{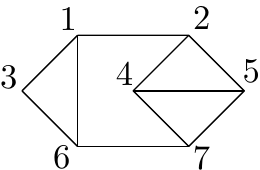} }
          \hfill
          \subfigure[\label{FigExampleGraphEdges}$E = E(\directed{G})$]{
\begin{minipage}[b][3em][l]{0.6\columnwidth}  
\resizebox*{0.9\columnwidth}{!}{$
\begin{array}{|r||c|c|c|c|c|c|c|c|c|c|c}
\hline
\mathbf{x} & 1 & 1 & 1 & 2 & 2 & 3 & 4 & 4 & 5 & 6 \\ \hline
\mathbf{y} & 2 & 3 & 6 & 4 & 5 & 6 & 5 & 7 & 7 & 7 \\
\hline
\end{array}
$}
\end{minipage}
          }
          \subfigure[\label{FigExampleGraphADirected}Dir.\,graph $\directed{G}$]{\includegraphics[scale=0.9]{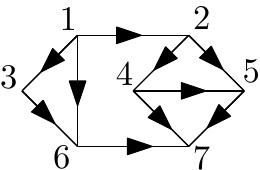} }
          \subfigure[\label{FigExampleGraphTrieArray}TrieArray $T$ for $E = E(\directed{G})$]{\includegraphics[scale=0.8]{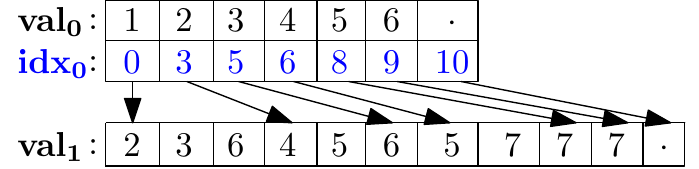} }
          \subfigure[\label{FigExampleGraphTiling}Boxed Search Space]{\includegraphics[width=0.525\columnwidth]{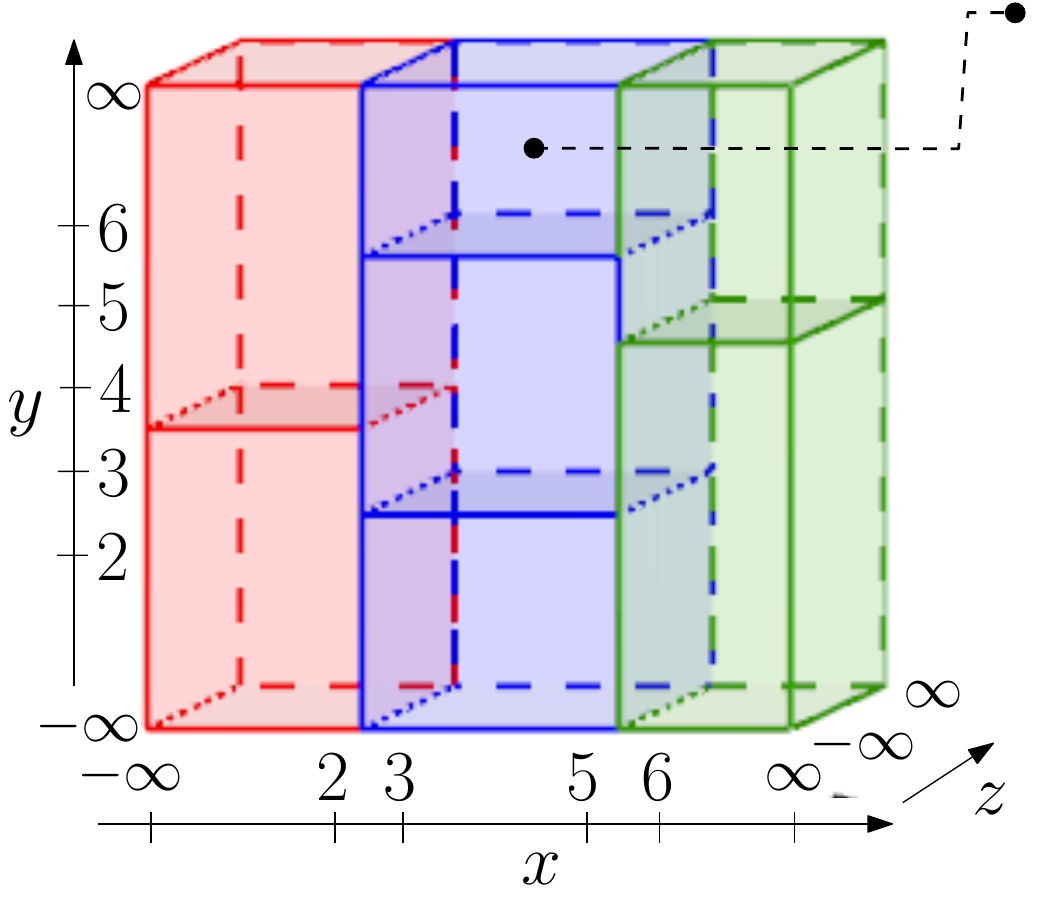}}
          \hspace{-3ex}
          \vspace{2ex}
          \subfigure[\label{FigExampleGraphTrieSlice}Example Box \& TrieSlices]{
\begin{minipage}[b][3.9cm][t]{0.47\columnwidth}  
{
{\textbf{Box} \boxInterval{3}{5}{6}{\infty}{-\infty}{\infty}}
\vfill
  \hspace{2ex}\includegraphics[scale=0.72]{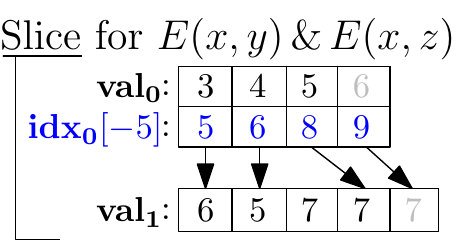}
\vspace{1ex}
\vfill
   \hspace{2ex}\includegraphics[scale=0.72]{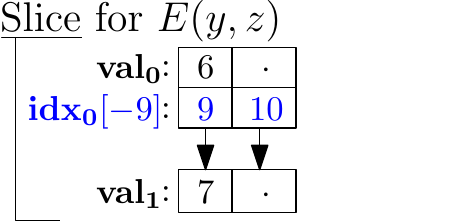}
\vspace{2ex}
}
\end{minipage}                  
}
\vspace*{-5mm}
\caption{\label{FigKeyFigure} Example for out-of-core technique for \lftjtri, i.e. $T(x,y,z)\!\!\leftarrow\!\!E(x,y),E(x,z),E(y,z)$ on $E(\directed{G})$}
\end{figure}

We now describe our out-of-core adaptation for LFTJ. 
LFTJ with a variable order $x_1,\dots,x_n$ computes the join by essentially searching over an $n$ dimensional space in which each dimension $i$ spans over the domain of the variable $x_i$. Loosely speaking, the space is searched in lexicographical order. As the example above demonstrates, this can lead to excessive I/O costs. Further I/O accesses are caused by the potentially non-local accesses for the binary searches of leapfrog-join.

In our approach, we partition the $n$-dimensional search space 
into ``hyper-cubes'' or \emph{boxes} such that the required data for an individual box fits into memory. LFTJ is then run over each box individually---finding all input data ready in memory.
  We strive for the following properties: (i) Determining box-boundaries is efficient: both in CPU and I/O work. (ii) Loading data that is restricted to a box is efficient, again, both in terms of CPU and I/O work. (iii) The total amount of data loaded is minimal. 
\Figref{FigKeyFigure} illustrates this strategy for \lftjtri. 
The join uses three variables $x$, $y$, $z$ -- resulting into a 3-dimensional search-space. 
If the input graph $G$ represented via a TrieArray does not fit into the available memory, then we partition the search space into 
boxes, for example as in \Figref{FigExampleGraphTiling}. The partitioning is chosen such that the input data restricted to an individual box 
fits into memory. \lftjtri is then run for each box individually one after another while join results are written append-only in a streaming fashion.

We now explain the different aspects in detail.

\subsection{TrieArray Slices} 

We assume that input data is given on external storage in a TrieArray representation, with the attribute order consistent with the 
chosen key order for LFTJ. 
This can easily be achieved via a pre-processing step that costs $O(SORT(|I|)$ block I/O and CPU steps.
When loading data for a single box into main memory, we directly operate on the TrieArray representation 
to subset the data. 
The remainder of this subsection shows that this step can be done very efficiently.

In general, applying any selection $\sigma$ 
to a TrieArray for a predicate $R$ to obtain a TrieArray for $\sigma(R)$ can be done in $O(|R|)$ cpu work and
$O(|R|/B)$ I/Os if $\sigma(t)$ can be computed in $O(1)$ time and space for tuples $t \in R$. 
This is because TrieArrays can be used to efficiently enumerate the represented tuples in lexicographically order, 
and they can also efficiently be built from lexicographically sorted tuples.

We are interested in certain \emph{range-based} selections. It turns out that these can be built even faster---with costs
proportional to the selected size $|\sigma(R)|$ rather then the total data set size $|R|$ (modulo log-factor), or even 
less depending on the cost-model. 

\begin{example}[TrieArray Slice] Consider the binary relation $E$ from \Figref{FigExampleGraphADirected} and its TrieArray $T$ 
in \Figref{FigExampleGraphTrieArray}. We are interested in the subset $S$ of $T$ that restricts the first attribute 
to the interval $[3,5]$, i.e., $S = \set{(x,y) \in T \st x \in [3,5]}$. We call this a slice of $S$ at level $0$ from $3$ to $5$. A TrieArray for this slice is shown at the top of \Figref{FigExampleGraphTrieSlice}. 
To build this slice, we can simply copy the values in $\mathbf{val_0}$ for the interval $[3,5]$; then look 
up where the corresponding $y$ values are in $\mathbf{idx_0}$, and copy these as well. The index value cannot 
simply be subset because the positions need to be shifted by the amount we cut off from $\mathbf{val_1}$ in the front:
the first index in $\mathbf{idx_0}$ of the slice should read $0$ instead of $5$. However, instead of changing the 
values, we add a wrapper to the index arrays that can subtract the offset (here 5) during accesses dynamically. 
Then, all data used in the arrays of the slice are simply sub-arrays of the original data. 
\end{example}

In general, for an $n$-ary relation $R$, we are interested in creating slices at a level $k$, $0 \leq k < n$.  At level $k$ the values are restricted to an interval given by a low-bound $l$ and a high-bound $h$; at levels $0,\dots,k-1$, the slice contains only a single element each. 
Formally:
\begin{definition}[Slices]
\label{DefTrieArraySlice}
Let $R$ be an $n$-ary relation, $0 \leq k < n$ an integer, $s$ be a $k$-ary tuple, and $l$ and $s$ be two domain values. The \emph{Slice $S$ of $R$ at level $k$ for $s$ from $l$ to $h$} (in symbols $S = \TrieSlice{R}{s}{l}{h}$) is the defined as:
\[
   S = \set{ (x_0,...,x_{n-1}) \in R \st (x_0,...,x_{k-1}) = s \textnormal{ and } l \leq x_k \leq h }
\]
We often do not mention the level explicitly as it is evident from the start tuple $s$; also, if $k = 0$, we simply say ``Slice for $R$ from $l$ to $h$''. 
\end{definition}

We create and store Slices in the TrieArraySlice data structure, which is a conventional TrieArray---except that the index arrays can be parameterized with an offset to perform dynamic index-adaptation as explained in the example above. As with TrieArrays, we identify the Slice (set of tuples) with the TrieArraySlice data structure and vice versa in the rest of the paper.

Given a relation $R$ on secondary storage, we can create slices of $R$ efficiently:
\begin{proposition}[Slice provisioning] Let $R$ be an $n$-ary relation stored on secondary storage as a TrieArray; 
$k, s, l,$ and $h$ be as in \defref{DefTrieArraySlice}. 
Then, the slice $S = \TrieSlice{R}{s}{l}{h}$ can be loaded into memory with $O(\log |R| + |S|/B)$ block I/Os and $O(\log |R| + |S|/B)$ CPU work, if it fits.
\end{proposition}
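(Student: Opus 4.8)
The plan is to exploit the Compressed-Sparse-Row contiguity of the TrieArray layout: since the children of consecutive parents are stored consecutively, the whole subtree hanging below the selected level-$k$ interval occupies a single contiguous window in each value and index array below level $k$. Building the slice then reduces to a constant number of searches that locate these windows, plus bulk transfers of contiguous sub-arrays, and I will argue that both stay within $O(\log|R| + |S|/B)$.

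First I would descend from the root to the node at level $k-1$ determined by $s$. At each level $i < k$ the relevant children form the contiguous block of $\mathbf{val_i}$ delimited by $\mathbf{idx_{i-1}}$ at the position of the chosen parent, and a single (exponential/binary) search inside that block locates the component $s_i$. Each search reads $O(\log|R|)$ positions lying in possibly distinct blocks, so the whole descent costs $O(\log|R|)$ I/Os and CPU work, the number of levels $k \le n$ being a fixed query parameter. Two further binary searches in the child block at level $k$ then pin down the contiguous position interval $[p,q]$ of level-$k$ values lying in $[l,h]$, again in $O(\log|R|)$.

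The technical heart is the structural claim that the descendants of positions $p,\dots,q$ form one contiguous window in every array below level $k$. Because the children of a position $j$ occupy the block of $\mathbf{val_{k+1}}$ delimited by $\mathbf{idx_k}[j]$ and $\mathbf{idx_k}[j+1]$, and these blocks are consecutive for consecutive $j$, the children of all of $p,\dots,q$ together span exactly the block of $\mathbf{val_{k+1}}$ from $\mathbf{idx_k}[p]$ to $\mathbf{idx_k}[q+1]$; iterating this downward shows that each $\mathbf{val_i}$ and $\mathbf{idx_i}$ with $i \ge k$ contributes a single contiguous sub-array. The window endpoints propagate through $O(n) = O(1)$ index lookups, so all boundaries are computed within the $O(\log|R|)$ budget, while the slice's top $k$ value arrays are singletons and its first $k$ index arrays are trivial, built in $O(1)$.

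Finally I would bulk-load the $O(1)$ contiguous windows. Their combined length equals the size of the slice's TrieArray, which by the earlier space bound (a TrieArray is linear in its relation, applied here to $S$) is $O(|S|)$; a contiguous run of $m$ words spans $O(m/B + 1)$ blocks, so across the constantly many arrays the transfer costs $O(|S|/B)$ I/Os. Since the slice arrays are merely these sub-arrays together with the offset wrappers of \defref{DefTrieArraySlice}, no per-element rewriting occurs: the only CPU work beyond the searches is issuing the $O(|S|/B)$ block reads and attaching $O(1)$ offset wrappers, yielding $O(\log|R| + |S|/B)$ for both measures. The main obstacle is making the contiguity argument and the offset bookkeeping precise; the ``if it fits'' hypothesis enters only to guarantee that the loaded windows co-reside in the $M$ available words.
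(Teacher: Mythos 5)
Your proposal is correct and follows essentially the same route as the paper's sketch: binary searches down the value arrays to locate the prefix $s$ and the interval $[l,h]$ in $\mathbf{val_k}$, then following the $\mathbf{idx}$ pointers to copy one contiguous window per remaining array, and finally attaching the offset wrappers. The only difference is that you spell out the contiguity-of-descendants argument that the paper leaves implicit in ``copy the appropriate ranges,'' which is a welcome but not substantively different addition.
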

\textsc{Sketch} The provisioning process is as follows: using $k$ binary searches on the value arrays $\mathbf{val_0},\dots,\mathbf{val_{k-1}}$, we locate the prefix $s$ in $R$; the slice is empty if the prefix does not exist. Then, using two more binary searches we locate the smallest element $l' \ge l$ and the largest $h' \leq h$ in $\mathbf{val_k}$ of $R$. Their positions are the boundaries in $\mathbf{val_k}$ and $\mathbf{idx_k}$ for the 
interval we copy into the slice. For the remaining $n-k$ value arrays and $n-k-1$ index arrays, we iteratively 
follow the pointers within the $\mathbf{idx}$ arrays and copy the appropriate ranges. As a last step we adjust the index-array's offset parameter: for each $j=k,\dots,n-2$, we set the offset parameter of $\mathbf{idx_j}$ to $-\mathbf{idx_j}[0]$. 

We require $O(\log |R|)$ I/Os for the binary searches and $O(|S|/B)$ I/Os for copying the continuous values from the arrays with indexes $\ge k$. Similarly, the binary searches require $O(\log |R|)$ CPU work; the remaining CPU work accounts for requesting the copy operations. $\qed$

\myparaNN Note that besides the logarithmic component, provisioning a slice amounts to simply copying large, continuous arrays from secondary storage into main memory. On modern hardware, these can be done using DMA methods without causing any significant CPU work. 
Moreover, modern kernels might simply memory map the to-be-copied pages and perform actual copies only when pages are modified.

\mypara{Probing} As the last building block, we are interested in provisioning slices that will fill up a certain budgeted amount of memory. In particular, we specify the prefix-tuple $s$ and lower bound $l$ as before. But instead of providing an upper bound $h$, we give a memory budget $m$ in blocks as shown in \Figref{FigProbeSingle}. We are then interested in a maximal upper bound $h \ge l$ such that the slice at $s$ from $l$ to $h$ requires no more than $m$ blocks of memory. Note that for skewed data, it is possible that the slice \TrieSlice{T}{s}{l}{h} requires more than $m$ blocks of memory, even when $h=l$. Should this case occur, we report via the sentinel value \texttt{SPILL} instead of returning an upper bound $h$. Not surprisingly, probing is also efficient:

\begin{proposition} 
For a TrieArray $T$ on secondary storage, probing the upper bound for a Slice to fill up a memory budget as described in \Figref{FigProbeSingle} requires $O(\log|T|)$ I/Os and CPU work.
\end{proposition}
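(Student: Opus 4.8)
The plan is to exploit two facts. First, the memory footprint of $\TrieSlice{T}{s}{l}{h}$ is monotonically non-decreasing in $h$, so the maximal admissible $h$ can be found by binary search over the candidate values of $\mathbf{val_k}$. Second, for any fixed candidate $h$ this footprint can be evaluated with only $O(1)$ random I/Os, which keeps every probe of the binary search cheap.

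First I would locate the prefix exactly as in the preceding proposition: $k$ binary searches on $\mathbf{val_0},\dots,\mathbf{val_{k-1}}$ find $s$ (reporting an empty slice if it is absent), and one more binary search on $s$'s window in $\mathbf{val_k}$ finds the position $p_l$ of the smallest value $l' \ge l$. All of this costs $O(\log|T|)$ I/Os and CPU work. Following the $\mathbf{idx}$ pointers downward from $p_l$ once, I record the left boundary of the selected range at every level $k,\dots,n-1$; these start positions are fixed and do not depend on $h$.

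The key step is the footprint oracle. For a candidate whose value sits at position $p_h$ in $\mathbf{val_k}$, the right boundary propagates downward through the index arrays just as the left one does: the right end at level $k+1$ is $\mathbf{idx_k}[p_h+1]$, at level $k+2$ it is $\mathbf{idx_{k+1}}$ evaluated there, and so on down to level $n-1$. Because each level's contribution to the slice is one contiguous interval, its length---and hence the words, and thus blocks (up to $O(1)$ per-array rounding and the $O(1)$ overhead for the single-element prefix levels $0,\dots,k-1$) that the slice occupies---is simply the difference between the known left boundary and this right boundary. Evaluating the total memory for a given $p_h$ therefore needs only $n-1-k = O(1)$ index lookups, i.e. $O(1)$ I/Os. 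I first evaluate the oracle at $p_h = p_l$; if even this smallest slice exceeds $m$, I return the sentinel \texttt{SPILL}. Otherwise I binary-search $p_h$ over the window from $p_l$ to the last child of $s$ in $\mathbf{val_k}$, using monotonicity to keep the largest position whose footprint is $\le m$, and return $h = \mathbf{val_k}[p_h]$. This search performs $O(\log|T|)$ probes, each costing $O(1)$.

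Summing the three phases yields $O(\log|T|)$ I/Os and CPU work. The only non-mechanical point is the footprint oracle: I must argue that the memory of a slice decomposes into a constant number of contiguous-range lengths and that every such length is recoverable from a single boundary index entry without scanning the deeper arrays. Monotonicity in $h$ is then immediate---enlarging $h$ only adds tuples---and justifies the binary search; the remaining bookkeeping (rounding words to blocks, accounting for the $O(1)$ prefix overhead) is routine.
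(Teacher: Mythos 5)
Your proposal is correct and follows essentially the same route as the paper's proof: binary search over the candidate upper bound in $\mathbf{val_k}$, with each guess's memory footprint computed in $O(1)$ (i.e., $O(n)$ for arity $n$) I/Os by following the $\mathbf{idx}$ pointers, for a total of $O(\log|T|)$. Your additional care about monotonicity, the one-time computation of the left boundaries, and the explicit \texttt{SPILL} check are welcome details but do not change the argument.
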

\textsc{Sketch} Similar to slice provisioning, except that we do a binary search for the upper bound and check for each guess how many blocks the TrieSlice would occupy. This can be done by following the $\mathbf{idx_i}$ pointers. Determining the size of the TrieSlice for each guess requires at most $O(n)$ I/Os where $n$ is the arity of $T$. Since we binary search in $\mathbf{val_k}$, an array that is at most size $|R|$, we obtain the required complexity of $O(\log|R|)$. $\qed$

\begin{figure}[t]
\hrule \vspace{0.5em}
\begin{algorithmic}[0]
\Function{Probe}{$T$, $s$, $l$, $m$} \textbf{returns} $h$
\Statex \textbf{in:} $n$-ary TrieArray $T$ \Comment{on secondary storage}
\Statex \textcolor{white}{\textbf{in:}} \TupleK $s$ \Comment{start tuple for attributes $0,\!..,k-1$}
\Statex \textcolor{white}{\textbf{in:}} value $l$ \Comment{Lower bound for attribute ~$k$}
\Statex \textcolor{white}{\textbf{in:}} \Int $m$ \Comment{memory budget in blocks}
\Statex \textbf{out:} %
 Maximal $h \ge l$ such that the slice \TrieSlice{T}{s}{l}{h} occupies 
\Statex \textcolor{white}{\textbf{out:}} $\leq m$ blocks of memory, or \texttt{SPILL} if no such $h$ exist.
\EndFunction
\end{algorithmic}
\vspace{0.25em}
\hrule
\caption{\label{FigProbeSingle} Interface for Single Slice Probing}
\end{figure}

\subsection{Boxing Procedure}
To help exposition, we first describe aspects of the boxing approach via examples, before we cover
the general case.

\mypara{Joins with one variable} Consider a join over multiple unary relations such as
\[
 Q(x) \leftarrow R(x), S(x), T(x).
\]
Imagine each of the body relations is larger than the available internal memory $M$. 
We can divide the internal memory into four parts, one for the output data and one for each of
the input relations. Since the output is written append-only, a relatively small portion of memory, which is
written to disk once it fills up, is sufficient. We thus divide up the bulk of the memory for the three input 
relations. We can use the simple strategy to \emph{evenly} divide the space. A boxed LFTJ execution 
would then simply alternate probing, provisioning, and calling LFTJ as described in \Figref{AlgExampleSimpleBoxing}.

\begin{figure}
\hrule \vspace{0.5em}
\begin{algorithmic}[1]
\State $l$ \Gets \NegInf \Comment{Value at the start of the search space}
\Repeat
   \State probe $R,S,T$ from $l$ for upper bounds $h_R$, $h_S$, $h_T$ 
   \State $h \Gets \min(h_R, h_S, h_T)$
   \State provision $R,S,T$ from $l$ to $h$
   \State run LFTJ on the provisioned slices
   \State $l \Gets \Successor{h}$ \Comment{lower bound is successor of old upper}
\Until{$\Inf = h$} \Comment{until we have searched all space}
\end{algorithmic}
\vspace{0.25em}
\hrule
\caption{\label{AlgExampleSimpleBoxing}Example: Boxing for $R(x),S(x),T(x)$}
\end{figure}

Not surprisingly, this approach would work well for the limited class of joins: for reading the input, it requires a number of I/Os 
bound by $O(|I|/B + |I|/M\log|I|)$ with $|I|$ being the combined size of the input relations. The key 
observations for showing the bound is that in each iteration (except possibly the last), 
at least one relation will load $O(M)$ (in our example around  $M/3$) tuples using $O(M/B)$ block reads. Now,
since there are only $|I|$ tuples in the input, there are at most $O(|I|/M)$ iterations. Since each
probing can be done in $O(\log|I|)$ we obtain the desired bound\footnote{Note that with a simple caching strategy for the provision step (always cache the block containing $h$ and reuse in the next provisioning if possible), we could 
make the argument that each block is read at most once by the provisioning step obtaining the same asymptotic bound.}.

\begin{figure}
\hrule \vspace{0.5em}
\begin{algorithmic}[1]
\Statex \textbf{variables:} \TupleM \gLow, \gHigh \Comment{Box boundaries}
\Procedure{Main}{}
   \State \Call{BoxUp}{1}
\EndProcedure
\Procedure{BoxUp}{\Int \Var} \Comment{\Var corresponds to $x_\Var$}
   \State \gLowAt{\Var} \Gets \NegInf 
\Repeat
   \State probe inputs $R_i$ from \gLowAt{\Var} for upper bound $h_i$
   \State $\gHighAt{\Var} \Gets h_i$
   \State provision $R_i$ from $\gLowAt{\Var}$ to $\gHighAt{\Var}$

   \If{$\Var < m$} :
      \Call{BoxUp}{i+1}
   \Else :
      run LFTJ on slices \Comment{Box: $[\gLow\!\cdot\!\cdot\!\cdot\!\gHigh]$}
   \EndIf
   \State $\gLowAt{\Var} \Gets \Successor{\gHighAt{\Var}}$ 
\Until{$\Inf = \gHighAt{\Var}$} 
\EndProcedure
\end{algorithmic}
\vspace{0.25em}
\hrule
\caption{\label{AlgExampleSimpleBoxingCrossProduct}Example: Boxing for $R_1(x_1),\!...,R_m(x_m)$}
\end{figure}

\mypara{Unary cross-products} Consider the cross-product of $m$ unary relations, with each relation larger than $M$:
\[
 Q(x_1,\!...,x_m) \leftarrow R_1(x_1), \dots, R_m(x_m).
\]
We again split the bulk of the available memory across the $m$ input relations. The boxing procedure is recursive where each \emph{dimension} $\Var$ of the recursion corresponds to a variable $x_i$ (See \Figref{AlgExampleSimpleBoxingCrossProduct}). The procedure starts with $\Var = 1$. In general, at a dimension $i$, we loop over the predicate $R_i$ via the probe-provisioning loop. Then, for each slice at dimension $i$, we do the same recursively for the next higher dimension. At the bottom of the recursion---when we reached the $i=m$, we call LFTJ on the created slices. 
Then, the slices provide data for the box $[\gLow\!\cdot\!\cdot\!\cdot\!\gHigh]$, i.e., in which the variable $x_i$ can range from $\gLowAt{\Var}$ to $\gHighAt{\Var}$. Note that (like above) we can run the original query over the slice data since the slices are guaranteed to not have data outside their range and thus the boxes partition the search-space without overlap.

\mypara{General joins} The general approach combines the two previous algorithms while also considering corner cases. Let $Q$ be a general full-conjunctive join of $m$ atoms, and variable order $\pi = x_1,\dots,x_n$ with no atom containing the same variable twice, and all atoms in $Q$ mentioning variables consistent with $\pi$. We first group the atoms based on their first variable $x_j$: we place all atoms that have as first variable $x_j$ into the array \Preds{1..n} at position $j$.  To follow the exposition, consider the join
\[
  Q(x_1,x_2,x_3) \leftarrow R(x_1,x_2), S(x_1,x_3), T(x_2,x_3), U(x_1)
\]
where we put $R,S$, and $U$ into \Preds{1} and $T$ into \Preds{2}. Like for cross-products, we recursively provision for the dimension $i$ ranging from $1$ to $n$. For each $i$, we use the method for joining unary relations for the atoms in \PredsT{i}. In particular, for each  $A_j \in \Preds{i}$ we probe and create slices for $A_j$ \emph{at level 0} regardless of $i$ or the arity of $A_j$.  
Thus, at dimension $i$, we iteratively provision atoms with $x_i$ as their first attribute restricting the range of $x_i$ but not any of the other variables $x_k$, $k>i$. This ensures that we can freely choose any partitions we might perform on these variables $x_k$ for $k>i$. Like with cross-products, we call LFTJ at the lowest level when $i=n$. 

The above works well unless any of the probes reports a \texttt{SPILL}, which can occur if a relation exhibits significant skew. For example, imagine there is a value $a$ for which $|S_a(x_3)|$ exceeds the allocated storage. Then, at dimension $i=1$, 
probing $S$ at level $0$ with a lower bound $a$ will return \texttt{SPILL}. We handle these situations by setting the upper bound at level $i=1$ to $a$, and essentially marking $S_a$ as a relation that needs to be provisioned at the dimension of its second attribute (eg, 3) alongside the atoms in \PredsT{3}. Note that a relation of arity $\alpha$ can spill $\alpha - 1$ times in worst case.

The general algorithm is given in \Algref{AlgTileGeneral}. We evenly divide the available storage among the $n$ dimensions, and assign the atoms $A$ to \PredsT{\Var} accordingly (lines 3-4). We also use a variable \LeftOverMemT to let lower dimensions utilize memory that was not fully used by higher dimensions. In line 11, we union the spills from the previous level to the atoms we need to provision. The method \texttt{probe} in line 12, probes atoms in \texttt{atms} to find an upper bound such that all atoms can be provisioned. We here, evenly divide \texttt{mem} by the size of \texttt{atms}.
The lower bound for probing are taken from \texttt{low}, which is also used to determine 
the starting tuples for possible spills. The method sets the upper bound at the current dimension and fills the spills predicate 
if necessary. The method \texttt{provision} provisions the predicate $A$ with bounds from \texttt{low} and \texttt{high} adapted to the variables occuring in $A$. 
It returns the slice and the size of used memory.

\begin{algorithm}[t]
\caption{\label{AlgTileGeneral}Boxing Leapfrog Triejoin}%
\begin{algorithmic}[1]
\Statex \textbf{in:} \maxMem \Comment available memory in blocks
\Statex \textcolor{white}{\textbf{in:}} $A_1,\dots,A_m$ \Comment body atoms and TrieArrays
\Statex \textcolor{white}{\textbf{in:}} $x_1,\dots,x_n$ \Comment key order, $n$ variables
\Statex \textbf{variables:} %
\Statex \textcolor{white}{\textbf{in:}} \Tuple \gLow, \gHigh \Comment Box boundaries
\Statex \textcolor{white}{\textbf{in:}} \ArrayOf \AtomSet  \Preds{1..n}  \Comment{atoms per level} %
\Statex \textcolor{white}{\textbf{in:}} \ArrayOf \SliceSet \Slices{1..n} \Comment{provisioned slices}%
\Statex \textcolor{white}{\textbf{in:}} \ArrayOf \AtomSet  \Spills{0..n} \Comment{spilled-over atoms}%
\Statex \textcolor{white}{\textbf{in:}} \ArrayOf \Int \BudgetVar{1..n} \Comment of memory in blocks
\Statex 
\Procedure{Main}{}
\For{$\Var \in \set{1,\dots,n}$}
    \State \BudgetVar{\Var} $\gets$ $\maxMem / n$ %
    \State \Preds{\Var} $\gets \set{ A_\Var \st x_\Var \textnormal{ is first variable in } A_\Var }$
\EndFor
\State \Call{\BoxLevel}{1, 0} \Comment{1st variable, no leftover memory}
\EndProcedure
\Statex
\Procedure{\BoxLevel}{\Var, \LeftOverMem}
\State \Budget \Gets $ \BudgetVar{\Var} + \LeftOverMem$
\State \gLow \Gets \NegInfTuple ; \gHigh \Gets \InfTuple
\Repeat
    \State \Slices{\Var} \Gets $\emptyset$ ; \UsedMemAccu \Gets 0
    \State \LocalAtoms \Gets \Preds{\Var} $\cup$ \Spills{\Var - 1}
    \State \Spills{\Var}, \gHighAt{\Var} \Gets probe(\Var, \LocalAtoms, \Budget, \gLow)
    \For{$A \in \LocalAtoms \setminus \Spills{\Var}$}
        \State \LocalSlice, \LocalMem \Gets provision($A$, \gLow, \gHigh)
        \State \UsedMemAccu \Gets \UsedMemAccu + \LocalMem
        \State \Slices{\Var} \Gets \Slices{\Var} $\cup$ \LocalSlice
    \EndFor
    \If{$\Var < n$}
       \State \LeftOverMem \Gets $\Budget - \UsedMemAccu$
       \State \Call{\BoxLevel}{\Var + 1, \LeftOverMem}
    \Else
       \State \textbf{run} LFTJ on $\bigcup\limits_{k=1..n} \!\!\!\!\Slices{k}$ on Box[$\gLow\!\cdot\!\cdot\!\cdot\!\gHigh$]
    \EndIf 
    \State \gLowAt{\Var} \Gets \Successor{\gHighAt{\Var}}
\Until \Inf = \gHighAt{\Var} 
\EndProcedure
\end{algorithmic}
\end{algorithm}

\subsection{I/O Complexity of Boxing}

We now analyze the Boxing approach to obtain complexity bounds on the number of block I/Os. Since 
we concentrate on full conjunctive queries, every output tuple is computed exactly once by LFTJ. As explained above,
we use some constant-size buffer to let the I/O cost for the output be $K/B$ where $K$ is the output size. 
We now analyze the cost of the I/Os for reading input data.

For each dimension $i$, $i=1,\!...,n$, let \LoopNums{i} be an upper bound on how often the repeat-until loop from 
lines 9--23 of \Algref{AlgTileGeneral} is executed for a single invocation of the surrounding \Call{\BoxLevel}{} procedure. 
\LoopNums{i} is determined by how often we need to provision to completely iterate through the atoms 
in $\PredsT{\Var} \cup \Spills{\Var}$.
In each step (except possibly the last) at least one of the input predicates $A_j$ loads $O(M)$ tuples---this is 
the predicate that determines the high bound \gHighAt{\Var}. In case there is no spill, this is immediately clear; but it is even true if 
a predicate is being spilled because its tuples are then ``consumed'' at a higher dimension. Note, that at the last dimension,
no spills can occur.
We thus have $L_i \in O(m|I|/M) = O(|I|/M)$, where $m$ is the number of atoms in the join.

Let us now determine how often for each dimension \Call{\BoxLevel}{} is called. We denote this number by \BoxNums{i}.
The outermost \Call{\BoxLevel}{}
is called once; \Call{\BoxLevel}{2,~.} is called once for each iteration of the repeat 
loop at level 1, that is $\LoopNums{1}$ times. In general, $\BoxNums{i} = \prod_{j=1}^{i-1} \LoopNums{j}$ and consequently 
$\BoxNums{i} \in O(|I|^{i-1}/M^{i-1})$. 

It is convenient to inject the following observation: %
\begin{lemma} \label{LemmaNumboxes} The number of boxes created by a boxed LFTJ with $n$ variables on input $I$ is $O(|I|^n/M^n)$. In particular, if $|I|\in O(M)$ then the number of boxes is $O(1)$.
\end{lemma}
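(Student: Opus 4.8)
The plan is to count boxes by chaining together the two quantities $\LoopNums{i}$ and $\BoxNums{i}$ already introduced in the preceding paragraph, so that essentially no new machinery is required. First I would observe that each box corresponds to exactly one execution of the \textbf{run} LFTJ statement, which sits in the \emph{else}-branch of the repeat-until loop at the innermost dimension $\Var = n$. Consequently the number of boxes equals the total number of innermost loop iterations, summed over all invocations of \BoxLevel at dimension $n$.

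Next I would combine this observation with the recursion already derived. The number of invocations of \BoxLevel at the deepest level is $\BoxNums{n} = \prod_{j=1}^{n-1} \LoopNums{j}$, and each such invocation executes its loop at most $\LoopNums{n}$ times. Hence the number of boxes is bounded by $\BoxNums{n}\cdot\LoopNums{n} = \prod_{j=1}^{n}\LoopNums{j}$. Substituting the earlier bound $\LoopNums{j}\in O(|I|/M)$ — and noting that the number of variables $n$ is a constant because we reason in data complexity with a fixed query — the product of these $n$ factors is $O((|I|/M)^n) = O(|I|^n/M^n)$, which is the first claim.

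For the ``in particular'' part I would simply specialize: if $|I|\in O(M)$ then each factor $\LoopNums{j}$ is $O(1)$, so the product of the (constantly many) factors is again $O(1)$.

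The only delicate point — and the step I expect to need the most care — is the identification ``number of boxes $= \prod_{j} \LoopNums{j}$''. This telescoping relies on the clean recursive structure of \Algref{AlgTileGeneral}, and in particular on the fact (noted just before the lemma) that no spills occur at the last dimension, so that $\LoopNums{n}$ is a genuine upper bound on the innermost loop count rather than something inflated by spilled-over atoms. I would also flag explicitly that collapsing $\prod_{j=1}^{n} O(|I|/M)$ into a single $O(\cdot)$ is legitimate precisely because $n$ is fixed; otherwise the hidden constant would carry an implicit dependence on $n$.
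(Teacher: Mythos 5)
Your proposal is correct and follows essentially the same route as the paper's own (one-line) proof: identify the number of boxes with the number of innermost loop executions, bound it by $\LoopNums{n}\BoxNums{n}=\prod_{j=1}^{n}\LoopNums{j}$, and substitute $\LoopNums{j}\in O(|I|/M)$. Your added remarks about the absence of spills at the last dimension and the fixed-$n$ justification for collapsing the product into a single $O(\cdot)$ are sound elaborations of details the paper leaves implicit.
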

\textsc{Proof} The number of boxes equals the number of loop executions at dimension $n$, which is bound by $\LoopNums{n}\BoxNums{n}$. $\qed$

\myparaNN
Back to the I/O costs. Consider only the I/O that is performed
directly in a certain \Call{\BoxLevel}{} call without counting the cost in the recursive 
calls from line 19. First, we count provisioning only. Here, during the evaluation of the repeat 
loop (lines 9-23), we load the data in $\PredsT{.} \cup \SpillsT{.} \subseteq I$. %
Similarly as in the case of joins with one variable, we can cache the last blocks containing the last tuple of the provisioned TrieSlices, and thus load each block from the input exactly once.
Consequently, the I/O work done to provision directly in each invocation of \Call{\BoxLevel}{i} is limited by $O(|I| / B)$.
The I/Os necessary for probing can be bound by $O(\LoopNums{i}\log|I|) = O(|I|/M\log|I|)$ since we probe 
at most $m$ relations once for each execution of the repeat loop. If we use the assumption that
$|I|/B$ is larger than $\log{|I|}$ as explained in \Secref{SecAssumptions}, we thus obtain $O(|I|/B)$
as I/O cost directly at dimension $i$ for a single \Call{BoxLevel}{} call. As last step, we 
multiply by $\BoxNums{i}$ to obtain the total I/O cost $C_i$ at dimension $i$ as $C_i \in O(|I|^i/(M^{i-1}B))$.  
Since output is written once and we consider joins without projections we obtain:
\begin{theorem} \label{thmGeneralIO}
The I/O complexity of boxed LFTJ with $n$ variables, input $I$ and output of size $K$ is $O(|I|^n/(M^{n-1}B) + K/B)$.
\end{theorem}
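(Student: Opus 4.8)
The plan is to obtain the stated bound by aggregating the per-dimension I/O costs $C_i$ derived just above, together with the output-writing cost, and then arguing that the resulting sum collapses to its top term. Concretely, I would split the total I/O into (a) the cost of reading input slices across all dimensions and (b) the cost of streaming out results. Part (b) is immediate: since the query is full-conjunctive with no projection, every output tuple is produced exactly once, and with the constant-size append-only buffer described earlier the writing costs $O(K/B)$ I/Os.

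For part (a), I would sum the directly-incurred input I/O over the $n$ dimensions, namely $\sum_{i=1}^{n} C_i$ with $C_i \in O(|I|^i/(M^{i-1}B))$ as established. The key observation is that consecutive terms satisfy $C_{i+1}/C_i = \Theta(|I|/M)$, so in the only interesting regime---where the input does not fit in memory, i.e.\ $|I| \ge M$---the sequence $(C_i)_i$ is non-decreasing and is dominated by its last term. Because we work in data complexity, $n$ is a constant, hence $\sum_{i=1}^{n} C_i \in O(n\,C_n) = O(C_n) = O(|I|^n/(M^{n-1}B))$. Adding the output term from part (b) yields the claimed $O(|I|^n/(M^{n-1}B) + K/B)$.

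I would then dispatch the degenerate case $|I| \in O(M)$ separately for cleanliness: here \lemmaref{LemmaNumboxes} guarantees only $O(1)$ boxes, and thus a constant number of probe/provision passes, so the entire input is read a constant number of times at cost $O(|I|/B)$, which is already subsumed by the stated bound. This also confirms that the top-term domination argument behaves correctly at the $|I| = M$ boundary.

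The main obstacle is not a deep one: the heavy lifting---bounding \LoopNums{i} and \BoxNums{i}, and especially absorbing the $O(\log|I|)$ probing overhead into $O(|I|/B)$ via the assumption $|I|/B \ge \log|I|$---has already been carried out in deriving $C_i$. The only point demanding care is the domination step, which relies on treating both $n$ and the number of atoms $m$ as constants (data complexity) and on the regime assumption $|I| \ge M$; without the latter the comparison $C_{i+1}/C_i \ge 1$ would fail and the sum would instead be governed by the smallest term. I would therefore make this regime split explicit so that the bound holds uniformly.
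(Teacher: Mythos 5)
Your proposal is correct and follows essentially the same route as the paper: it takes the per-dimension costs $C_i \in O(|I|^i/(M^{i-1}B))$ already derived in the text, adds the $O(K/B)$ output cost, and concludes. The only difference is that you make explicit the final summation over dimensions and the domination of $\sum_i C_i$ by $C_n$ in the regime $|I| \ge M$ (plus the $|I| \in O(M)$ boundary case), steps the paper leaves implicit when it states the theorem immediately after computing $C_i$.
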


\mypara{Optimality} 
This complexity is optimal when only the number $n$ of variables is used to characterize the query. This is because the Cartesian product of $n$ relations can produce $\Theta(|I|^n)$ output which requires $\Theta(|I|^n/B)$ block writes.

Furthermore, in practice, the input is often only by a constant factor larger than the available memory: 
\begin{corollary} \label{CorBestComplexity} The I/O complexity of boxed LFTJ for any query on input $I$ and output of size $K$ is $O(|I|/B + K/B)$ if $I \in O(M)$. 
\end{corollary}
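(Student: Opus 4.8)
The plan is to derive the corollary directly from \theoremref{thmGeneralIO} by substituting the hypothesis $|I| \in O(M)$ into the general bound $O(|I|^n/(M^{n-1}B) + K/B)$. The only term that depends on $n$ is the first one, so the whole argument reduces to showing that $|I|^n/(M^{n-1}B)$ collapses to $O(|I|/B)$ once the input is at most a constant factor larger than the available memory.

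Concretely, I would first factor the dominant term as
\[
   \frac{|I|^n}{M^{n-1}B} \;=\; \frac{|I|}{B}\cdot\left(\frac{|I|}{M}\right)^{n-1}.
\]
The hypothesis $|I|\in O(M)$ gives a constant $c$ (for inputs above a fixed threshold) with $|I|\le cM$, hence $(|I|/M)^{n-1}\le c^{n-1}$. The one point that must be stated explicitly is the data-complexity convention fixed in \Secref{SecAssumptions}: the query, and therefore the number of variables $n$, is a fixed constant. Thus $c^{n-1}$ is itself a constant and is absorbed into the big-$O$, yielding $|I|^n/(M^{n-1}B)\in O(|I|/B)$. Adding the unchanged output term $K/B$ gives $O(|I|/B + K/B)$, as claimed.

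For robustness I would also note the equivalent ``box-counting'' reading, which matches the bookkeeping already carried out before \theoremref{thmGeneralIO}. By \lemmaref{LemmaNumboxes} (and, more precisely, by $\BoxNums{i}\in O(|I|^{i-1}/M^{i-1}) = O(1)$ when $|I|\in O(M)$), each dimension is entered by only $O(1)$ invocations of \BoxLevel. Since the preceding analysis shows that a single \BoxLevel invocation performs only $O(|I|/B)$ direct provisioning I/O—using the standing assumption $|I|/B \ge \log|I|$ to subsume the probing cost—summing over the $n = O(1)$ dimensions gives $O(|I|/B)$ total input I/O, with the output contributing $K/B$.

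I do not expect any genuine obstacle: the substitution is routine and no new construction is needed. The only subtlety worth flagging in the write-up is that the collapse $(|I|/M)^{n-1}=O(1)$ relies essentially on $n$ being constant, which is precisely what the data-complexity model licenses; were $n$ allowed to grow with the input, the constant $c^{n-1}$ would no longer be benign.
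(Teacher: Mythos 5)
Your proof is correct and follows exactly the route the paper intends: the corollary is stated as an immediate consequence of Theorem~\ref{thmGeneralIO}, obtained by writing $|I|^n/(M^{n-1}B)$ as $(|I|/M)^{n-1}\cdot|I|/B$ and using that $|I|\in O(M)$ together with the fixed-query (data-complexity) convention makes the factor $(|I|/M)^{n-1}$ a constant. Your additional box-counting reading via Lemma~\ref{LemmaNumboxes} is a valid consistency check but not a different argument.
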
 
This (better looking) bound is, obviously, optimal for queries that require reading the entire input. 

\mypara{No spills} If the execution does not produce any spills, we can strengthen the general result. To do so, 
we quickly need to introduce a property of queries:
\begin{definition} 
The \emph{rank $r_\pi(Q)$ of a query $Q$ conforming to the key-order $\pi=x_1,...,x_n$} is the largest $j$ 
for which $Q$ contains an atom with $x_j$ as first variable. The \emph{rank $r(Q)$ of $Q$} is the 
minimum of $r_\pi(Q)$ where $\pi$ is any key-order.
\end{definition}
Clearly, the rank of a query (for any key-order) is bound by the number of variables---but sometimes smaller.
E.g., for the triangle query $r_{x,y,z}\eqref{EQtriangleQuery} = 2$, but also $r\eqref{EQtriangleQuery}$ is 2. 
Note that $r_\pi(Q)$ is the largest $i$ for which \PredsT{\Var} is non-empty when boxing $Q$ with key-order 
$\pi$. 

\begin{theorem} \label{thmSpillIO} If no spills occur during a boxed execution of LFTJ for the query $Q$ with key-order $\pi$, then the total I/O cost is 
$O(|I|^\ell/(M^{\ell-1}B) +K/B)$ where $|I|$ denotes the combined size of the input relations, $K$ the combined size of the output relations and $\ell = r_\pi(Q)$ is the rank of $Q$ for $\pi$.
\end{theorem}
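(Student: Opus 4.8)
The plan is to revisit the per-dimension accounting from the proof of \theoremref{thmGeneralIO} and to show that, in the absence of spills, every dimension above the rank $\ell = r_\pi(Q)$ contributes no read I/O and, crucially, does not inflate the box count. Recall from that proof that \LoopNums{i} bounds the number of repeat-loop iterations inside one \BoxLevel invocation at dimension $i$, that $\BoxNums{i} = \prod_{j=1}^{i-1}\LoopNums{j}$ counts how often \BoxLevel is entered at dimension $i$, and that the direct I/O of a single \BoxLevel call is $O(|I|/B)$, yielding $C_i \in O(|I|^i/(M^{i-1}B))$ after multiplying by \BoxNums{i}.

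First I would establish the key structural observation. By the definition of rank, for every $i > \ell$ no atom of $Q$ has $x_i$ as its first variable, so $\PredsT{i} = \emptyset$; and since no spills occur, $\SpillsT{i-1} = \emptyset$ as well. Hence the working set \LocalAtoms at dimension $i > \ell$ is empty. Tracing \Algref{AlgTileGeneral}, an empty \LocalAtoms forces probe to return $\gHighAt{i} = \Inf$ together with an empty spill set, the provisioning for-loop executes zero times, and the until-condition $\Inf = \gHighAt{i}$ is met after a single pass. Thus $\LoopNums{i} = 1$ for all $i > \ell$.

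From this the box count collapses: for $i > \ell$ we obtain $\BoxNums{i} = \prod_{j=1}^{\ell}\LoopNums{j} \in O(|I|^\ell/M^\ell)$, so the number of boxes (equivalently, of \lftj invocations) is $O(|I|^\ell/M^\ell)$ rather than $O(|I|^n/M^n)$ --- a sharpening of \lemmaref{LemmaNumboxes} to the no-spill setting. For the I/O bound itself I would then argue that the only source of read I/O is provisioning, which occurs exclusively at dimensions $i \le \ell$ (dimensions above the rank have an empty \LocalAtoms and therefore neither provision nor probe any data), while the concluding \lftj run at dimension $n$ operates entirely on slices already resident in memory and so incurs only the $K/B$ output writes. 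Summing the surviving contributions $C_1,\dots,C_\ell$, each $C_i \in O(|I|^i/(M^{i-1}B))$, the geometric growth with ratio $|I|/M \ge 1$ makes $C_\ell \in O(|I|^\ell/(M^{\ell-1}B))$ dominate (the query, hence $\ell$, being fixed), giving the claimed $O(|I|^\ell/(M^{\ell-1}B) + K/B)$.

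I expect the main obstacle to be the careful justification that a dimension with an empty atom set genuinely contributes a single loop iteration and no I/O --- in particular verifying that probe on an empty atom set returns \Inf and that the recursive descent from dimension $\ell+1$ down to $n$ is a single chain rather than a branching one. One should also confirm that the slices provisioned for a box stay valid (unchanged) while the variables $x_{\ell+1},\dots,x_n$ are left unrestricted, so that the single in-memory \lftj call at the bottom correctly enumerates all output tuples of that box without re-reading any input.
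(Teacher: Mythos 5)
Your proposal is correct and follows essentially the same route as the paper's own (very terse) proof: dimensions $i > \ell$ have empty atom and spill sets, hence contribute no I/O and only a single loop iteration, and the bound follows by summing $C_i$ for $i \le \ell$. Your additional details --- that $\LoopNums{i}=1$ above the rank, the sharpened box count $O(|I|^\ell/M^\ell)$, and the observation that $C_\ell$ dominates the sum --- are all consistent elaborations of what the paper leaves implicit.
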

\textsc{proof} At dimensions $i > \ell$, there are no I/O operations since both \PredsT{i} and \SpillsT{i} are empty, obtaining the desired 
result by summing up $C_i$ for $i \leq \ell$. $\qed$

\myparaNN  Spills occur in the boxed LFTJ execution if there is an input relation $R$ and value $a$ for which the 
Slice $R_{a\rightarrow a}$ exceeds the size of the memory allocated for $R$.
We can thus characterize when they occur. For a query with $n$ variables and $m$ atoms: Let $M'$ be the memory used for the body of 
the query. If we divide up all space evenly among all $n$ variables, and for each dimension, evenly among all $m$ predicates, 
then the critical value for any 
$|R_a|$ is approximately $M'/(2nm(k-1))$, since the slice for $R_{a \rightarrow a}$ has a size of at most around $2(k-1)|R_a|$.
\TODO{Possibly see appendix for exact values}

\subsection{CPU Data-Complexity of Boxing}

The CPU work performed by a boxed \lftj on input data $I$ falls into two categories: (1) the work necessary to 
determine the number of boxes and to provision them, and (2) the work done by the in-memory \lftj executing over the 
boxes. For an input $I$, the asymptotic work in category 2 is trivially bound by
the asymptotic work of the in-memory \lftj on $I$ multiplied by the number $P$ of boxes used, simply because each 
invocation uses input that is a subset of $I$. For the work in category 1: deciding on the bounds of a single box is done in $O(\log|I|)$,
copying its required data takes no more than $O(|I|)$ resulting into a total upper bound of $O(P|I|)$ for $P$ boxes.

Using \lemmaref{LemmaNumboxes}, 
we can thus conclude:
\begin{theorem}  \label{ThmCPUComplexityDoesntChange}
On inputs $I$ that are only by a constant factor larger than the available memory $M$,
the asymptotic computational data-complexity of the boxed LFTJ matches the one of the in-memory version of LFTJ
or is linear in $|I|$, whichever is worse.
\end{theorem}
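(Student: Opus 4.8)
The plan is to combine the two-category decomposition of the boxed \lftj's CPU work, already set up in the text above, with \lemmaref{LemmaNumboxes}. Let $T_{\textnormal{mem}}(I)$ denote the asymptotic CPU data-complexity of the in-memory \lftj on input $I$, and let $P$ be the number of boxes produced by the boxed execution. The hypothesis is $|I| \in O(M)$, so \lemmaref{LemmaNumboxes} immediately yields $P \in O(1)$. With this single fact in hand, both categories collapse to their single-box cost up to a constant, and the theorem follows by addition.

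First I would bound category~2 (the work done by the in-memory \lftj running over the boxes). Each box invokes \lftj on a collection of TrieArraySlices whose combined size is a subset of $I$. Since the TrieArraySlice data structure supports every TrieIterator operation with the same (amortized) complexity as an ordinary TrieArray --- the offset wrapper only adds an $O(1)$ subtraction during index accesses, preserving the requirements of \Secref{SecLFTJcomplexityrequirement} --- the in-memory guarantees transfer verbatim to each boxed invocation. Hence a single box costs at most $T_{\textnormal{mem}}(I)$, and summing over the $P \in O(1)$ boxes gives $O(T_{\textnormal{mem}}(I))$.

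Next I would bound category~1 (determining box boundaries and provisioning their data). As noted above, deciding the bounds of one box costs $O(\log|I|)$ and copying its data costs $O(|I|)$; across $P$ boxes this is $O(P|I|) = O(|I|)$. Adding the two categories yields a total of $O(T_{\textnormal{mem}}(I) + |I|) = O(\max\{T_{\textnormal{mem}}(I),\,|I|\})$, which is exactly ``the in-memory complexity, or linear in $|I|$, whichever is worse,'' and matches the footnote's clarification that the bound degrades to $O(|I|)$ precisely when $T_{\textnormal{mem}}(I) \in o(|I|)$.

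The only point requiring genuine care --- rather than routine summation --- is the per-box category~2 bound: one must justify that running \lftj on a subset of $I$ never asymptotically exceeds its cost on all of $I$. This is the monotonicity of \lftj's cost in the relation sizes, and it follows because the Leapfrog-join running-time bound depends only on the cardinalities of the participating unary relations, each of which can only shrink under slicing. Once this monotonicity is granted, the bound $P \cdot T_{\textnormal{mem}}(I)$ on category~2 is immediate, and the rest of the argument is the elementary $O(a+b) = O(\max\{a,b\})$ simplification.
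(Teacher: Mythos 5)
Your proposal is correct and follows essentially the same route as the paper: the same two-category decomposition, the bound $P\cdot T_{\textnormal{mem}}(I)$ for the per-box join work, the $O(P|I|)$ bound for probing and provisioning, and \lemmaref{LemmaNumboxes} to conclude $P \in O(1)$. The extra care you take in justifying that slices preserve the TrieIterator complexity bounds and that \lftj's cost is monotone under shrinking inputs only makes explicit what the paper dispatches with ``simply because each invocation uses input that is a subset of $I$.''
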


\section{LFTJ Applied to Triangle Listing} \label{SecLFTJTriangle}

\subsection{Boxed LFTJ-$\mathbf{\Delta}$} 
From Corollary~\ref{CorBestComplexity}, we immediately get 
an I/O complexity of $O(|E|/B+K/B)$ if $|E|\in O(M)$.
Without this assumption, plugging the triangle query $\eqref{EQtriangleQuery}$ 
into Theorems \ref{thmGeneralIO} and \ref{thmSpillIO}, we obtain:
\begin{corollary}
The boxed LFTJ applied to the triangle query has an I/O complexity of $O(|E|^3/(M^2B)+K/B)$. 
If there are no spills the complexity is $O(|E|^2/(M B) + K/B)$.  
\end{corollary}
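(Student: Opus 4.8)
The plan is to obtain both bounds as immediate instantiations of the two general I/O theorems; the entire task reduces to reading off the correct parameters for the triangle query $\eqref{EQtriangleQuery}$ and checking that the combined input size is $O(|E|)$.

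First I would fix the key-order $\pi = x,y,z$ and note that the query uses $n = 3$ variables. The combined input $I$ is made up of the three occurrences of the edge relation $E$ appearing in the body, so $|I| = 3|E| \in O(|E|)$, with the factor $3$ absorbed by the asymptotic notation. Substituting $n = 3$ and $|I| \in O(|E|)$ into \theoremref{thmGeneralIO} then yields the general bound directly: $O(|I|^n/(M^{n-1}B) + K/B) = O(|E|^3/(M^2B) + K/B)$.

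For the spill-free case I would invoke \theoremref{thmSpillIO}, whose bound is governed by the rank $\ell = r_\pi(Q)$. As already established in the excerpt, this rank equals $2$ under $\pi = x,y,z$: the atoms $E(x,y)$ and $E(x,z)$ have $x$ as their first variable and $E(y,z)$ has $y$ as its first variable, while no atom begins with $z$. Hence \PredsT{3} is empty and $\ell = 2$ is the largest dimension carrying atoms. Plugging $\ell = 2$ and $|I| \in O(|E|)$ into \theoremref{thmSpillIO} gives the sharper bound $O(|I|^2/(MB) + K/B) = O(|E|^2/(MB) + K/B)$.

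Since both claims follow by substitution into theorems we may assume proven, I expect no genuine obstacle. The only points meriting a moment's care are the bookkeeping that the three copies of $E$ inflate $|I|$ by only a constant factor, and the observation---made just before the corollary---that the triangle query never places any atom at dimension $3$ under $\pi = x,y,z$, which is precisely why the rank is $2$ rather than $3$.
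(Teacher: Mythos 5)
Your proposal matches the paper exactly: the corollary is obtained there, as here, by plugging the triangle query into Theorems~\ref{thmGeneralIO} and~\ref{thmSpillIO} with $n=3$ and rank $r_{x,y,z}=2$. The bookkeeping you note (the three occurrences of $E$ inflating $|I|$ only by a constant, and no atom beginning with $z$) is precisely what the paper relies on.
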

With no spills, boxed LFTJ thus matches the I/O complexity of MGT \cite{hu2013massive}, which is 
optimal if $M \ge |V|$ as shown in \cite{hu2013massive}. From above, we know that spills only occur if there
is a single node that has more than around $M/18$ neighbors, for 5GiB of allocated memory and 64bit node ids, 
this amounts to an upper limit of 37 million neighbors per node, a number that is seldom reached in practice. Interestingly, the \emph{core} MGT algorithm in \cite{hu2013massive} also requires that the node degree is limited. MGT achieves the bound without restrictions 
by deploying a pre-processing step.

For the compute complexity of boxed \lftjtri, we rely on \theoremref{ThmCPUComplexityDoesntChange},
expecting to essentially maintain the performance of in-memory \lftjtri assuming $|I|\in O(M)$. 

\subsection{In-Memory LFTJ-$\mathbf{\Delta}$ CPU Complexity} \label{SecTriangleCPUComplexity}

In this section, we use the conventions that $G=(V,E)$ is always the input graph. %
While the previous section was specific to our version of LFTJ that uses TrieArrays, the results here apply
to all LFTJ implementations as long as the basic TrieIterator operations adhere to the complexity bounds
given in \cite{Veldhuizen14} and restated in \Secref{SecLFTJcomplexityrequirement}. 
Following with little work directly from \cite{Veldhuizen14} and \cite{atserias2008size}: %
\begin{proposition} LFTJ-$\Delta$'s computational complexity on input $E$ is $O(|E|^{1.5}\log |E|)$, which is optimal modulo the log-factor.
\label{propSimpleCPUComplexity}
\end{proposition}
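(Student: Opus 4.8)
The plan is to prove the two halves of the statement separately. The upper bound $O(|E|^{1.5}\log|E|)$ will follow from Veldhuizen's worst-case-optimality guarantee \cite{Veldhuizen14} combined with the fractional edge-cover (AGM) bound of \cite{atserias2008size}, while the matching lower bound $\Omega(|E|^{1.5})$ will follow by exhibiting a dense graph family whose triangle count already forces that much work on \emph{any} correct algorithm.

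For the upper bound, I would first compute the AGM bound for the triangle query \eqref{EQtriangleQuery}. Assigning fractional weight $1/2$ to each of the three edge atoms $E(x,y)$, $E(x,z)$, $E(y,z)$ yields a valid fractional edge cover: every variable occurs in exactly two atoms, so each receives total weight $1/2 + 1/2 = 1 \ge 1$. The corresponding AGM bound on the output is therefore $|E|^{1/2}\cdot|E|^{1/2}\cdot|E|^{1/2} = |E|^{3/2}$, which is simultaneously a bound on the number of tuples LFTJ can enumerate across its search. Invoking the worst-case-optimality theorem of \cite{Veldhuizen14}, which bounds the number of TrieIterator operations by the AGM bound, and then charging the per-operation cost of \Method{seek} and \Method{next} (each $O(\log|E|)$ under the complexity requirements of \Secref{SecLFTJcomplexityrequirement}), the total CPU work is $O(|E|^{3/2}\log|E|)$.

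For optimality, I would take the complete graph $\FamilyOfInputs$ on $n$ vertices as the hard instance. Directing it as in the definition of $\directed{G}$ preserves the edge count, so $|E| = \binom{n}{2} = \Theta(n^2)$, and the graph contains $\binom{n}{3} = \Theta(n^3)$ triangles, each reported exactly once by \eqref{EQtriangleQuery}. Rewriting in terms of the edge count via $n = \Theta(|E|^{1/2})$, the output size is $K = \Theta(|E|^{3/2})$. Since any correct triangle-listing algorithm must emit each of these triangles, it spends $\Omega(|E|^{3/2})$ time merely producing output, regardless of its internal strategy; hence no algorithm beats $\Omega(|E|^{3/2})$ in the worst case, and \lftjtri is optimal up to the $\log|E|$ factor.

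The main obstacle is precision at two points. First, the worst-case-optimality result of \cite{Veldhuizen14} must be applied in the form that bounds the \emph{number of iterator steps} by the AGM bound, rather than just the output size, so that the logarithmic factor enters exactly once through the per-step cost of \Method{seek}/\Method{next}; getting this accounting right is what converts the combinatorial $|E|^{3/2}$ bound into the $|E|^{3/2}\log|E|$ runtime. Second, for the lower bound one must confirm that the extremal family realizes $\Theta(|E|^{3/2})$ output for every edge budget (up to constants), which $\FamilyOfInputs$ does cleanly; this is precisely the tightness of the AGM bound that makes the obstruction unavoidable.
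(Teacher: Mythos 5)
Your proposal is correct and follows essentially the same route as the paper: the upper bound via LFTJ's worst-case optimality (Corollary~4.3 of \cite{Veldhuizen14}) combined with the AGM fractional edge-cover bound of $|E|^{1.5}$ for the triangle query, and the lower bound from the existence of graphs with $\Omega(|E|^{1.5})$ triangles. You merely spell out details the paper leaves implicit, namely the explicit weight-$\tfrac{1}{2}$ fractional cover and the complete graph as the extremal instance.
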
 
\begin{proof} See \appref{apppropSimpleCPUComplexity}. \end{proof}

The rest of the section, strengthens this result by analyzing the complexity of LFTJ on families of graphs that are characterized by the number of edges and their \emph{arboricity}. The arboricity $\alpha(G)$ of an undirected graph $G$ is a standard measure for graphs, counting the minimum number of edge-disjoint forests that are needed to cover the graph. A classic result by Nash-Williams \cite{NashWilliams01011964} links this number to the graph's density by showing that no subgraph $H$ of $G$ has more than $k(|V(H)|-1)$ edges if and only if $\alpha(G) \leq k$.
In general, $\alpha$ is in  $O(\sqrt{|E|})$ \cite{chiba1985arboricity} for any graph $G=(V,E)$. However, in many real-world graphs, $\alpha$ is significantly smaller\cite{chiba1985arboricity,lin2012arboricity}.

It turns out that the runtime-complexity for \lftjtri is related to the graph's arboricity with \lftjtri behaving better the smaller $\alpha$ is. It thus makes sense to consider \lftjtri's complexity for graphs characterized by an upper bound on their arboricity. For compatibility with the asymptotic complexity, we bound the graph's arboricity with respect to their edge-size:
\begin{theorem} 
\label{ThmLftjcomplexityalpham}
Let $\hat\alpha : \mathbb N \rightarrow \mathbb N$ be a monotonically increasing
function. Then, \lftjtri runs in $O(m\hat\alpha(m)\log m )$ time on graphs with at most $m$ edges and arboricity of at most $\hat\alpha(m)$.
\end{theorem}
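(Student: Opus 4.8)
The plan is to split the running time of \lftjtri on the triangle query into the three leapfrog joins it performs (the three nested loops of \Algref{AlgFLTJTriangle}), bound each level with the Leapfrog-join running-time bound (Prop.~3.1 of~\cite{Veldhuizen14}), and show that the innermost join dominates and is governed by the arboricity. Write $m=|E|$, let $V_1 = \set{x \st (x,v)\in E}$ be the set of sources in $\directed{G}$, and let $d^+(v)=|D(v)|$ be the out-degree of $v$, so that $\sum_v d^+(v)=m$. Two of the three levels are cheap. The level-$x$ join computes $V_1\cap V_1$ in $O(|V_1|)=O(m)$ time. At level $y$, for each $a\in V_1$ we run one join of $V_1$ with $D(a)$, costing $O(\min(|V_1|,d^+(a))\log m)\le O(d^+(a)\log m)$ by Prop.~3.1; summing over $a\in V_1$ this is $O(\log m\sum_a d^+(a))=O(m\log m)$.

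The dominant term is level $z$. It runs one join of $D(a)$ with $D(b)$ for every pair $(a,b)$ with $b\in V_1\cap D(a)$; each such pair is a directed edge $(a,b)\in E$ whose head also lies in $V_1$, so the set of pairs is a subset of the edges of $\directed{G}$. By Prop.~3.1 each join costs $O(\min(d^+(a),d^+(b))\log m)$, so the total level-$z$ work is at most
\[
   O\Bigl(\log m \sum_{(a,b)\in E}\min(d^+(a),d^+(b))\Bigr).
\]

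The crux is therefore the combinatorial bound $\sum_{(a,b)\in E}\min(d^+(a),d^+(b)) = O(\alpha m)$, where $\alpha=\alpha(G)$. Since $d^+(v)\le d(v)$ for the undirected degree $d$, it suffices to show $\sum_{\set{u,v}\in E}\min(d(u),d(v)) \le 2\alpha m$, the classic Chiba--Nishizeki bound~\cite{chiba1985arboricity}. I would prove this straight from the definition of arboricity: decompose $E$ into $\alpha$ forests, root each tree, and orient every edge toward its root; this yields an orientation in which each vertex has out-degree at most $\alpha$. Charging each edge to its tail and using $\min(d(u),d(v))\le d(\text{tail})$ gives $\sum_{\set{u,v}}\min(d(u),d(v)) \le \sum_u d(u)\cdot\mathrm{outdeg}(u) \le \alpha\sum_u d(u) = 2\alpha m$. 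Hence the level-$z$ work is $O(\alpha m\log m)$.

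Summing the three levels yields $O(m)+O(m\log m)+O(\alpha m\log m)=O(\alpha m\log m)$ (using $\alpha\ge1$). On the stated family $\alpha\le\hat\alpha(m)$, giving $O(\hat\alpha(m)\,m\log m)$ as claimed. The main obstacle is the combinatorial lemma; everything else is a routine application of Prop.~3.1. The two reductions that make the lemma work are (i) passing from directed out-degrees to undirected degrees via $d^+\le d$, which lets me invoke the standard bound for the fixed id-orientation $\directed{G}$, and (ii) the forest-orientation argument, which is exactly where the Nash-Williams characterization~\cite{NashWilliams01011964} of arboricity enters.
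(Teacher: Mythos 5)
Your proof is correct and follows essentially the same route as the paper's: the same level-by-level decomposition of the three leapfrog joins, with levels $x$ and $y$ bounded by $O(m\log m)$ and level $z$ reduced to $\sum_{(a,b)\in E}\min(d(a),d(b)) \le 2\alpha(G)\,m$, which the paper cites as Lemma~2 of Chiba and Nishizeki \cite{chiba1985arboricity}. The only difference is that you additionally prove that combinatorial lemma from scratch via the forest-decomposition/orientation argument (and are slightly more careful in passing from out-degrees in $\directed{G}$ to undirected degrees), which makes the argument self-contained but does not change the approach.
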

\begin{proof}
See \appref{AppLftjalphabound}. Analyze the work done by the leapfrog joins at levels $x$, $y$, and $z$. Only the third level is interesting, where we use a result by \cite{chiba1985arboricity} that gives an upper bound of $2\alpha(G)|E|$ for the sum $\sum_{(x,y)\in E} \min\set{d(x),d(y)}$.
\end{proof}

Clearly, if the maximum degree of our graphs is bounded, than their arboricity is in $O(1)$. Furthermore,
the arboricity of planar graphs is also in $O(1)$
\cite{chiba1985arboricity}, immediately leading to:
\begin{corollary}  \lftjtri lists triangles in $O( |E| \log |E|)$ steps for planar graphs and for graphs with bounded degree.
\end{corollary}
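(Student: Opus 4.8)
The plan is to obtain this corollary as an immediate specialization of \theoremref{ThmLftjcomplexityalpham}: in each of the two cases I would exhibit a \emph{constant} choice of $\hat\alpha$ that validly bounds the arboricity, and then read off the running time. The only input beyond the theorem is the fact, already recalled in the preceding text, that both graph classes have arboricity bounded by an absolute constant.

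Concretely, for planar graphs I would invoke the classical bound $\alpha(G) \le 3$, which follows from the Nash-Williams characterization together with the planar edge bound $|E(H)| \le 3(|V(H)|-1)$ holding for every subgraph $H$; so setting $\hat\alpha(m) = 3$ yields a function that bounds the arboricity of every planar graph independently of its edge count. For graphs of maximum degree at most $d$, any subgraph $H$ satisfies $|E(H)| \le d\,|V(H)|/2$, so by Nash-Williams the arboricity is at most $\lceil d/2 \rceil \le d$; taking $\hat\alpha(m) = d$ again gives a constant bound. In both cases the chosen $\hat\alpha$ dominates the true arboricity for graphs with at most $m$ edges, as required by the hypothesis of \theoremref{ThmLftjcomplexityalpham}.

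It then remains only to substitute. Plugging the constant $\hat\alpha$ into the theorem gives running time $O(m\,\hat\alpha(m)\log m) = O(m\log m)$, and since $m$ is the number of edges $|E|$ this is exactly $O(|E|\log|E|)$, as claimed. I expect no real obstacle here, since the entire content is the constant-arboricity fact for the two classes, which is standard and already cited. The one point worth a moment's care is confirming that a constant function meets the monotonicity hypothesis of \theoremref{ThmLftjcomplexityalpham}; it does, as constants are (weakly) monotonically increasing, so no strengthening or restatement of the theorem is needed.
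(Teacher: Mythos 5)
Your proposal is correct and follows exactly the paper's route: the paper derives the corollary immediately from \theoremref{ThmLftjcomplexityalpham} by noting (citing Chiba--Nishizeki) that planar and bounded-degree graphs have arboricity in $O(1)$, which is precisely your instantiation of $\hat\alpha$ by a constant. Your extra detail---deriving the constant bounds via Nash-Williams and checking the monotonicity hypothesis---only makes explicit what the paper leaves to a citation.
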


We can also amend the optimality result from \propref{propSimpleCPUComplexity} showing
that \lftjtri remains optimal (modulo log-factor) even when considering graphs with a limited arboricity:
\begin{theorem} \label{ThmFineGrainedOptimal}
Let $\hat\alpha : \mathbb N \rightarrow \mathbb N^+$ be a monotonically increasing, computable function that is not identical to $1$ and in $o(\sqrt{n})$. Then, no algorithm that lists all triangles for input graphs $G=(V,E)$ \emph{with} arboricity of at most $\hat\alpha(|E|)$ can run in $o(|E|\hat\alpha(|E|))$ time.
\end{theorem}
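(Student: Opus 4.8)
The plan is to prove the bound by an output-size argument: any algorithm that lists all triangles must spend at least one step per triangle it emits, so it suffices to exhibit an infinite family of graphs $G_i$ that (a) lie in the promised class, i.e.\ have arboricity at most $\hat\alpha(|E(G_i)|)$, and (b) contain $\Omega(|E(G_i)|\,\hat\alpha(|E(G_i)|))$ triangles, with $|E(G_i)|\to\infty$. On such a $G_i$ the algorithm is forced to run for $\Omega(|E|\,\hat\alpha(|E|))$ steps, so its running time cannot be $o(|E|\,\hat\alpha(|E|))$.

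For the building block I would use a disjoint union of cliques. The complete graph $K_{2v}$ has arboricity exactly $v$, contains $\binom{2v}{3}=\Theta(v^3)$ triangles, and has $\binom{2v}{2}=\Theta(v^2)$ edges. Taking $p$ disjoint copies keeps the arboricity at $v$ (arboricity of a disjoint union is the maximum over components) while giving $|E|=p\binom{2v}{2}=\Theta(pv^2)$ edges and $p\binom{2v}{3}=\Theta(pv^3)=\Theta(|E|\,v)$ triangles. Thus, as long as the construction is \emph{feasible} (arboricity $v\le \hat\alpha(|E|)$) and $v=\Theta(\hat\alpha(|E|))$, the triangle count is $\Theta(|E|\,\hat\alpha(|E|))$, as required.

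The crux is choosing, for infinitely many scales, the clique size $2v$ and the number of copies $p$ so that both $v\le\hat\alpha(|E|)$ and $\hat\alpha(|E|)=O(v)$ hold simultaneously; the difficulty is that the edge ``quantum'' $\binom{2v}{2}=\Theta(v^2)$ of a single clique is coarse, while an adversarial $\hat\alpha$ may jump by large multiplicative factors over short ranges. I would resolve this with a \emph{good-scale} lemma: since $\hat\alpha\in o(\sqrt n)$ is non-decreasing, there are infinitely many $n$ with $\hat\alpha(2n)\le 2\hat\alpha(n)$. (Otherwise $\hat\alpha(2n)>2\hat\alpha(n)$ for all large $n$; iterating along $n,2n,4n,\dots$ forces $\hat\alpha(2^k n)>2^k\hat\alpha(n)$, so $\hat\alpha(N)/\sqrt N\to\infty$, contradicting $\hat\alpha\in o(\sqrt n)$.) Fix such a good scale $n$ (large enough that $v:=\hat\alpha(n)\ge 2$) and note $v=o(\sqrt n)$, hence $\binom{2v}{2}=o(n)$. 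Rounding \emph{up}, take $p:=\lceil n/\binom{2v}{2}\rceil$, so that $|E|=p\binom{2v}{2}\in[\,n,\;n+\binom{2v}{2}\,]\subseteq[\,n,\,2n\,]$ for $n$ large. Then $|E|\ge n$ gives $\hat\alpha(|E|)\ge\hat\alpha(n)=v$ (feasibility), while $|E|\le 2n$ together with the good-scale property gives $\hat\alpha(|E|)\le\hat\alpha(2n)\le 2v$.

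Putting the pieces together, the graph has arboricity $v\le\hat\alpha(|E|)$ (so it is in the class) and $\Theta(|E|\,v)=\Theta(|E|\,\hat\alpha(|E|))$ triangles; since there are infinitely many good scales with $n\to\infty$ this yields an infinite hard family, completing the output-size argument. The same construction covers the case where $\hat\alpha$ is eventually a constant $c\ge 2$ (it is excluded only from being identically $1$): every large $n$ is then a good scale, $v=c$ is constant, and the family gives $\Theta(|E|)=\Theta(|E|\,\hat\alpha(|E|))$ triangles. The computability hypothesis is not needed for the counting itself; it only guarantees that the hard instances can be generated effectively. The main obstacle, as indicated, is the interaction between the coarse clique granularity and the possible local blow-ups of $\hat\alpha$, which the good-scale lemma neutralizes by pinning $|E|$ into the window $[\,n,\,2n\,]$.
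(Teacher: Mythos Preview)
Your proposal is correct and shares the paper's core strategy: an output-size lower bound via disjoint unions of cliques $K_{2v}$ with $v$ chosen near $\hat\alpha$. The difference lies in how the two handle what you call the ``granularity'' obstacle.

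You worry that the edge count $|E|=p\binom{2v}{2}$ is quantized, so $|E|$ may overshoot the target $n$ and land where $\hat\alpha$ has jumped; you fix this with a good-scale lemma (infinitely many $n$ with $\hat\alpha(2n)\le 2\hat\alpha(n)$) and then pin $|E|\in[n,2n]$.

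The paper's route is more direct: for \emph{every} $m$ it builds a graph with \emph{exactly} $m$ edges by taking $\lfloor m/\binom{2v}{2}\rfloor$ copies of $K_{2v}$ with $v=\hat\alpha(m)$ and \emph{padding the remainder with isolated edges}. Since $|E|=m$ on the nose and the arboricity equals $\hat\alpha(m)$, the feasibility condition $\alpha(G)\le\hat\alpha(|E|)$ is an identity; no control over the jumps of $\hat\alpha$ is needed at all. The hypothesis $\hat\alpha\in o(\sqrt n)$ is then used only at the end, to show that the error terms $\hat\alpha(m)^3$ and $\hat\alpha(m)^2$ (coming from the one clique lost to rounding) are $o(m\hat\alpha(m))$, so the triangle count is still $\Omega(m\hat\alpha(m))$.

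Both arguments work. The paper's padding trick is simpler and yields a hard instance at every sufficiently large $m$, whereas your good-scale lemma produces hard instances only along a sparse (but infinite) subsequence---which is enough for the theorem, but slightly weaker as a structural statement.
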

\begin{proof} See \appref{AppFineGrainedOptimal}. It turns out that for any such $\hat\alpha$, we can construct large graphs that have $\Theta(|E|\hat\alpha(|E|))$ triangles. \end{proof}

We highlight that the above theorem is quite general. It only requires the algorithm to be correct for input graphs of restricted arboricity\footnote{Except for the corner-case where the arboricity is bound by 1, in which case the graphs have no triangles and an $O(1)$ algorithm trivially exists.}. For example, even if we (somehow) knew that all our input graphs have an arboricity $\alpha$ bound by, say, $42\log|E|$, we could not design a specialized algorithm that only works on these graphs \emph{and} has a runtime complexity of $o(|E|\log|E|)$. 

\myparaNN The optimality from \theoremref{ThmFineGrainedOptimal} does unfortunately not directly follow from the 
worst-case optimality of LFTJ for families of instances that are closed under renumbering (Thm~4.2 in\cite{Veldhuizen14}), 
because the optimality in \cite{Veldhuizen14} was obtained when each relation symbol appears only once in the body of the join, a property used in the proof for Thm~4.2 of \cite{Veldhuizen14}.

\section{Implementation} \label{SecImpl}

We have implemented a general-purpose join-processing system with LFTJ at its core. To highlight its generality, we briefly list its current features. We support multiple fixed-size primitive data types including int64, double, boolean, and a fixed-point decimal type. Predicates (stored as TrieArrays) can have variable arities and we support marking a prefix of the attributes as key (the TrieArray then needs fewer index arrays). Predicates support loading and storing from and into CSV files. Besides materialized predicates that store data, we have TrieIterator implementations for various ``builtins'' such as comparison operators and arithmetic operators. Using a simple command-shell, joins such as the triangle query can be issued in a Datalog-like syntax. We require the written joins to have atoms with variables consistent with a global key order. At the head of rules, we support optional projections, and some aggregations. The system uses secondary storage (via memory-mapped files) to allow processing of data that exceeds the physical memory; and deploys the here presented Boxing technique. We have not implemented a query optimizer (to find good key orders), nor do we currently support mutating relations, also we do not support transactions. In the following, we highlight aspects of the system that likely have an impact on performance, yet whose detailed analysis and description goes beyond the scope of this paper.

\mypara{Removing interpretation overhead} Datalog queries that are issued are compiled to optimized machine-code and loaded as a shared library into the system. Our code still uses the TrieIterator interfaces but most code is templatized: predicates by their arity, key-length and types of the attribute; TrieIterators by their types and arity; the LFTJ by the key-order, TrieIterators of body atoms as well as each of their variables; a rule by the LFTJ for processing the body and the classes that perform so-called head-actions. Using this approach, we can still program with the convenient TrieIterator interfaces---yet allow the C++ compiler to potentially inline join processing all the way down to the binary searches using the appropriate comparison operators for the type at hand.

\mypara{Misc Optimizations} We are also deploying a parallelization scheme for LFTJ to utilize multiple cores. In the boxed LFTJ version, boxes are worked on one after another, yet LFTJ utilizes available cores while processing a single box. We will also provide single-threaded performance when comparing with single-threaded competitors.

Even though dividing the available memory evenly across the dimensions is sufficient to obtain the asymptotic complexity bounds, using more memory at smaller dimensions reduces the number of boxes created. Note that as long as the memory used at each dimension is a constant fraction of the total memory, the complexity bounds remain in tact. We picked a ratio of 4:1 for dividing up the memory between $x$:$y$ in the triangle query. We also do not allocate budget to dimensions $j$ that do not have an atom using $x_j$ as first variable (eg, $z$). This is fine since in case there is a spill the budget for the spilling relation will be moved over to the next dimension.

If there are two atoms referring to the same relation \emph{and} having the same first variable, we naturally only provision and create one slice for them. For example in the triangle query, we probe and provision a single relation $E$ at dimension $x$ for the atom $E(x,y)$ and the atom $E(x,z)$. Of course, in the case of spills they might get untangled at higher dimensions. We do \emph{not} exploit the fact that the third atom $E(y,z)$ refers to the same relation. 

We envision that for some queries, an optimizer, aided by constraints provided by the user, can avoid provisioning certain boxes because it can infer that there cannot possibly be a query result within that box. For example, in our case, we know that $x<y<z$. This can easily be inferred from the constraint $a<b$ for any $(a,b) \in E$. Based on this, we do not need to provision at dimension $y$ if the high bound for $y$ is smaller than the low bound for $x$. We have put a hook into the boxing mechanism to bypass provisioning if \emph{after} probing this condition is met. A detailed exploration of constraints and their interactions with probing and provisioning is beyond the scope of this work.

\section{Experimental Evaluation} \label{SecExperiments}

In our experimental evaluation, we focus on the triangle listing problem. %
Here, we investigate the following questions: (1) What is the CPU overhead introduced by boxing LFTJ? (2) How well does boxed LFTJ cope with limited available main memory, how does vanilla LFTJ do? (3) How does LFTJ compare to best-in-class competitors?

\mypara{Evaluation environment} We use a desktop machine with an Intel i7-4771 core, that has 4 cores (8 hyper-threaded), each clocked at 3.5GHz. The machine has 32GB of physical memory and a single SSD disk. It is running Ubuntu 14.04.1 with a stock 3.13 Linux kernel.

\begin{figure}
\hspace*{-3ex}\resizebox*{1.09\columnwidth}{!}{
\begin{tabular}{r||r|r|r|r|r|r|r} 
                   & {LJ}        & {Orkut}          & {RAND16}         & {RMAT16}   & {RAND80}                & {RMAT80}       & {Twitter}       \\
 \hline                                                                                                                                      
 \hline                                                                                                                                      
 \vspace*{-2ex}                                                                                                                              
 \\                                                                                                                                          
 csv               & 500MB       & 1.8GB            & 4.1GB            & 4.0GB             & 22GB             & 22GB           &  25GB           \\
 TA                & 315MB       & 1.2GB            & 2.3GB            & 2.2GB             & 11GB             & 11.2GB         &  10GB           \\
 $|V|$             & ~4 Mio      & ~3 Mio           & 16 Mio           & 16 Mio            & 80 Mio           & 80 Mio         &  ~42 Mio        \\
 $|E|$             & 35 Mio      & 117 Mio          & 256 Mio          & 256 Mio           & ~1.28 Bill       & ~1.28 Bill     &  1.2 Bill       \\   
 $\frac{|E|}{|V|}$ & 8.7         & 38.1             & 16               & 16                &  16              & 16             &  28.9           \\
 \#$\Delta$        & 178 Mio     & 628 Mio          & 5457             & 2.2 Mio           & 5491             & 884,555        &  35 Bill        
\end{tabular}}
\vspace*{-2ex}
\caption{Characteristics of the used data sets. \label{FigTableData}}
\end{figure}
\mypara{Data} We use both real-world and synthetic input data of varying sizes. The data statistics are shown in \Figref{FigTableData}. The smallest dataset we consider is ``LJ'', which contains the friend-ship graph of the on-line blogging community LiveJournal \cite{snapnets, YangLJ}. Next, ``Orkut'' is the friend-ship graph of the free online community Orkut \cite{mislove2007socialnetworks, snapnets}. `TWITTER' is one of the largest freely available graph data sets. It contains the as-of-2010 ``follower'' relationships among 42 Million twitter users \cite{Kwak10www}. The dataset has 1.47 billion of these relations, which we interpret as undirected edges in a graph, resulting in 1.2 billion edges. This dataset contains almost 35 billion triangles. Unlike the first two data sets, which we obtained from \cite{snapnets}, twitter was gathered from \cite{twitterUrl}. We also consider synthetically generated data due to its better understood characteristics. We focus on two 
datasets: `RAND' and `RMAT'. Each comes in a medium-sized version with 16 million nodes and 256 million edges and a large version with 80 million nodes and 1.28 billion edges. In the `RAND' dataset, we create edges by uniformly randomly selecting two endpoints from the graph's nodes. The `RMAT' data contains graphs created by the Recursice Matrix approach as proposed by Chakrabarti et al.\cite{chakrabarti2004r}. This approach creates graphs that closely match real-world graphs such as computer networks, or web graphs. We used the data generator available at \cite{RMATGeneratorUrl} with its default parameters. The LiveJournal and the synthetic graphs were also used by the MGT work in \cite{hu2013massive} and earlier work \cite{chu2011triangle} to evaluate out-of-core performance for the triangle listing problem. All graphs have been made simple by removing self and duplicate edges. The CSV sizes in \Figref{FigTableData} refer to the CSV data where each undirected edge $\set{a,b}$ is mentioned only once. TA stands for
our TrieArray representation as described in the earlier sections. We use 64 bit integers per node identifier.

\mypara{Methodology} We measure and present the time for running the algorithms on the mentioned data sets with various configurations and memory restrictions. We will run our TrieArray-based implementation of LFTJ with various configurations and two competing algorithms. Since all algorithms need to report the same number of triangles, we essentially run them in ``counting-mode'' and we thus do not account for the time nor the I/O it takes to output the triangles. This was also done in \cite{hu2013massive}. Input data for LFTJ is given in TrieArray format; we do not
include the time it takes to create the TrieArray from CSV data (which can be done in at most two passes after sorting the data).

\begin{figure*}
\includegraphics[width=0.195\textwidth]{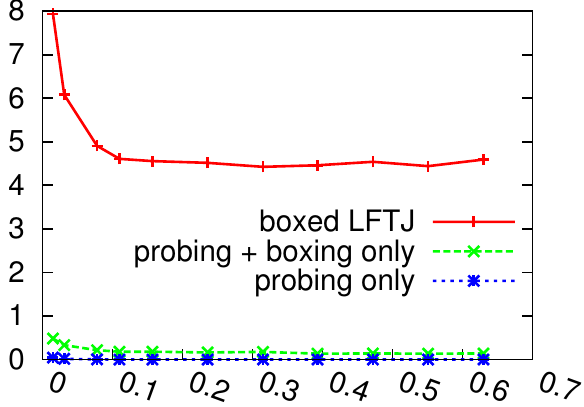}
\includegraphics[width=0.195\textwidth]{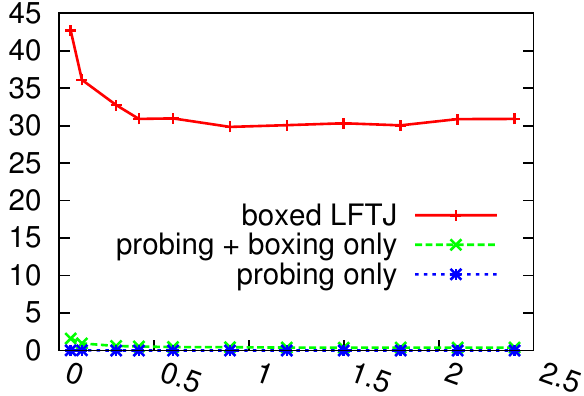}
\includegraphics[width=0.195\textwidth]{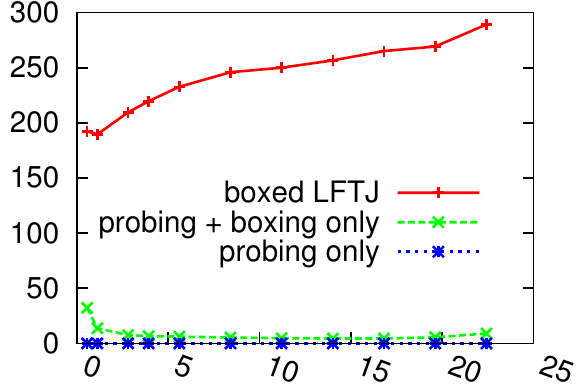}
\includegraphics[width=0.195\textwidth]{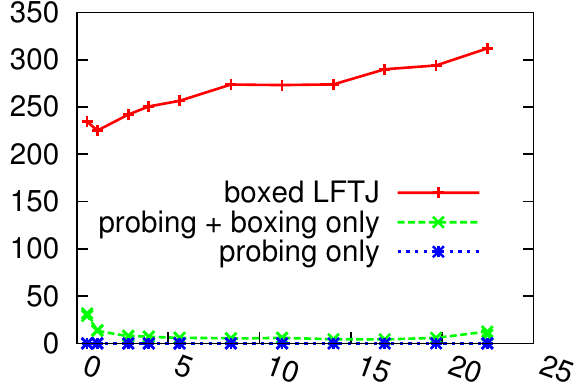} 
\includegraphics[width=0.195\textwidth]{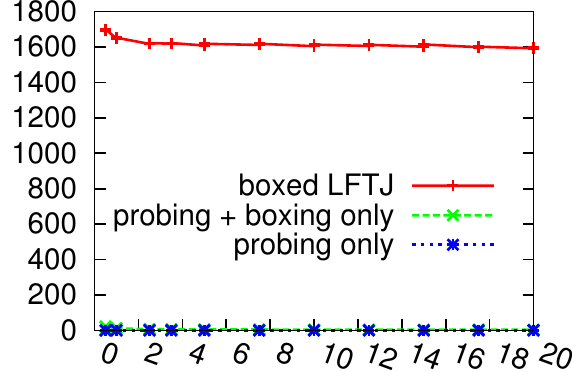}\\
\includegraphics[width=0.195\textwidth]{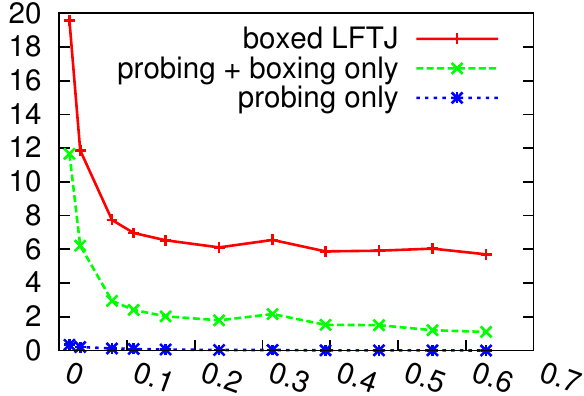}
\includegraphics[width=0.195\textwidth]{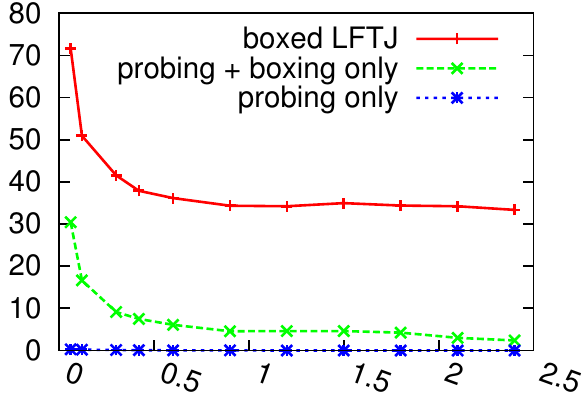}
\includegraphics[width=0.195\textwidth]{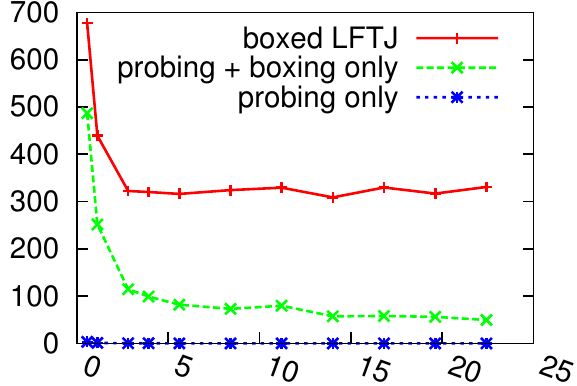}
\includegraphics[width=0.195\textwidth]{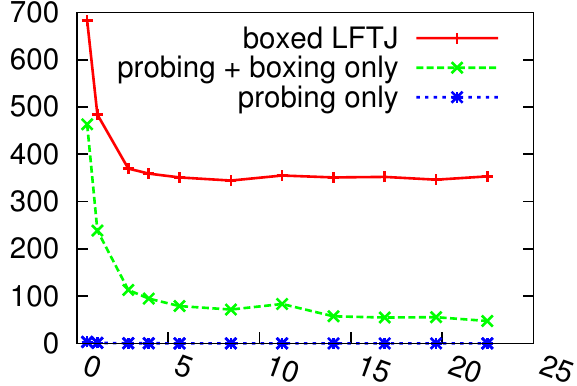} 
\includegraphics[width=0.195\textwidth]{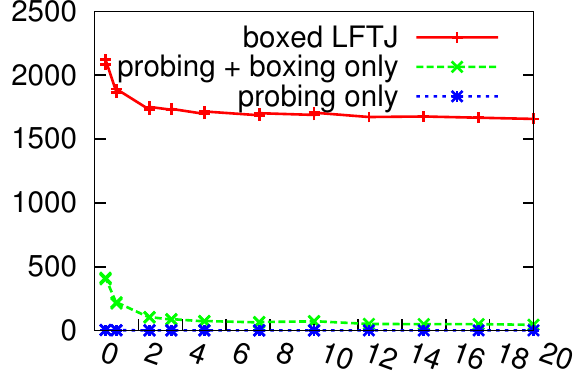}\\
\subfigure[LJ]     {\includegraphics[width=0.195\textwidth]{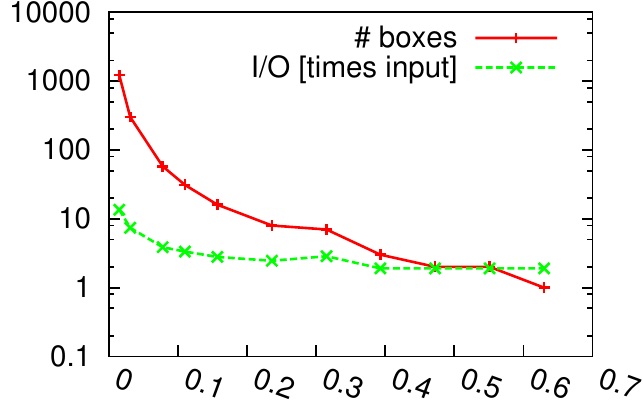}}
\subfigure[ORKUT]  {\includegraphics[width=0.195\textwidth]{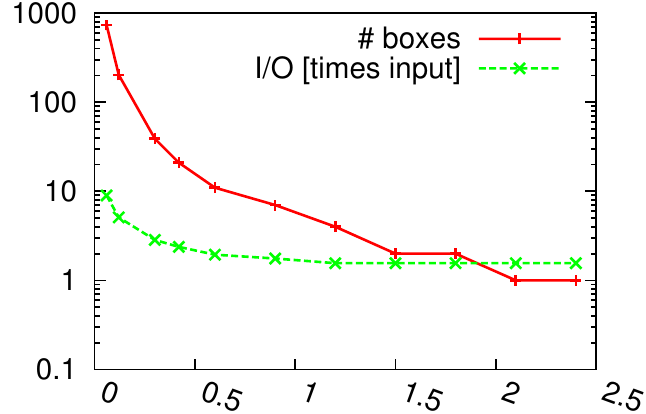}}
\subfigure[RAND80]{\includegraphics[width=0.195\textwidth]{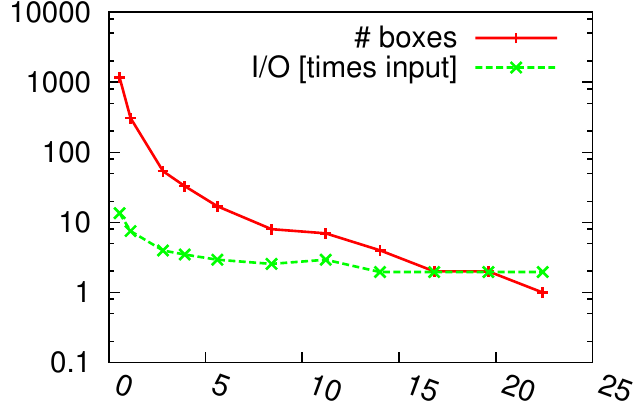}}
\subfigure[RMAT80]{\includegraphics[width=0.195\textwidth]{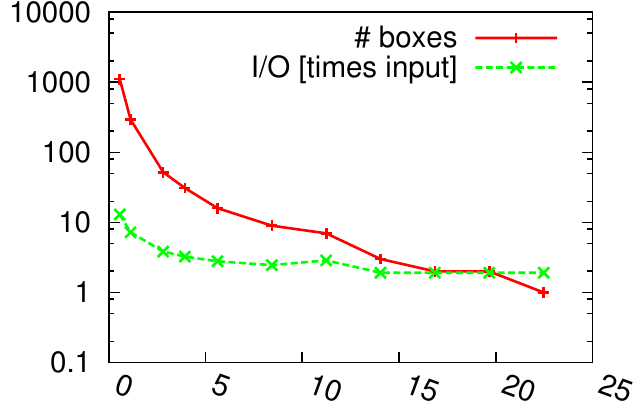}}
\subfigure[TWITTER]{\includegraphics[width=0.195\textwidth]{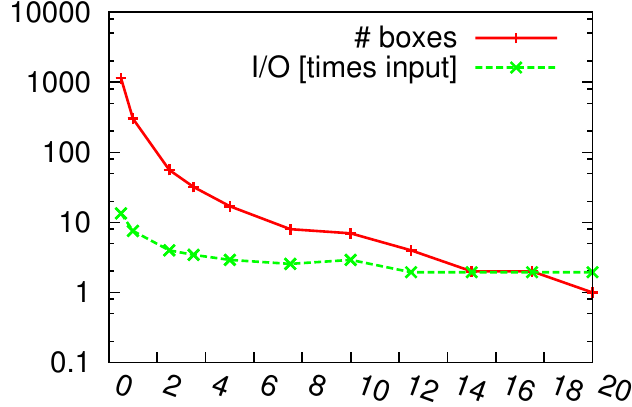}}
\vspace*{-4ex}
\caption{\label{FigBoxingEval}Boxed LFTJ Analysis. \textnormal{On the X-Axis, we vary the amount of total memory available for boxing shown in GB. First row show total runtime in seconds without OS-level memory-restrictions and warm caches to evaluate the additional CPU work necessary for boxing. For performance in an out-of-core scenario, we enforce OS-level memory restrictions and have all caches cleared before execution in the second row. The third row shows the number of boxes and the amount of provisioned memory in multiples of the size of the input data. Omitted graphs for \{RAND|RMAT\}16 look like the ``80'' variants.}}
\vspace*{-2ex}
\end{figure*}

\myparaND{What CPU overhead does Boxing introduce?} To measure the CPU overhead that is introduced by the boxing approach, we advise LFTJ to only use memory the size of a fraction of the input during execution---yet, we do not place any limit on the caches the operating system keeps for file operations. To further (almost completely) remove I/O, we prefix the execution by \texttt{cat}-ting all input data to \texttt{/dev/null}, which essentially pre-loads the Linux file-system cache. We now consider the two questions (i) What is the CPU overhead for probing and copying? and (ii) What is the overhead introduced by running LFTJ on individual boxes in comparison to running LFTJ on the whole input data. To answer the first question, we simply run three variants: (a) the full LFTJ, (b) probing and copying data into TrieArraySlices without running LFTJ, and (c) only probing without copying input data nor running LFTJ. Results are shown in the first row of \Figref{FigBoxingEval}. On the X-Axis, we vary the space available for boxing. The individual points range from $5,10,\dots$ up to $200\%$ of the input data size in TrieArray representation. We chose to range up to 200\% since the input is essentially read twice by \lftjtri: once for each of the dimensions $x$ and $y$. 

\textit{Results.} Answering question (i): We can see that the CPU work performed for probing and copying is very low in comparison to the work done by the join evaluation, even when the box sizes are limited to as little as $5\%$ of the size of the input. Answering (ii), we look at the red lines for LFTJ and compare the curve with the value at the far right as this one is achieved by using a single box. The real-world data sets behave as expected: starting at around 25\%, they level out demonstrating that the CPU overhead is low if the available memory is not too much smaller than the input data size. Now, for the synthetic data sets, we see that unexpectedly, using \emph{more} boxes \emph{reduces} the CPU work (memory range 10\%--200\%). We speculate that this is because the boxed version might reduce the work done in binary searches for \texttt{seek} since the space that needs to be searched is smaller. Only at 5\%, does this trend reverse and using more smaller boxes takes longer.

\myparaND{How well does Boxing do with limited memory?} We are also interested in the performance of the boxing technique when disk I/O needs to be performed. Here, we run the same experiments as above but we clear all linux system caches (see \appref{AppendixCaches}) before we start a run. We further use Linux's \emph{cgroup} feature to limit the total amount of RAM used for the program (data+executable) and any caches used by the operating system to buffer I/O on behalf of the program. As actual limit we use the value given to the boxing and shown on the X-Axis plus a fixed 100MB (that accounts for the output buffer and the size of the executable). Results are shown in the second row of \Figref{FigBoxingEval}. We see that probing is still very cheap even for the 5\% memory setting; Provisioning the data now has noticeable costs for low-memory settings (25\% and below). However, even then, it is mostly dominated by the time to actually perform the in-memory joins. This is even more so for the real-world data sets. Overall, with around 25\% or more memory, boxed LFTJ's performance stays constant indicating that I/O is not the bottleneck. For example, we can count all 37 billion triangles in the TWITTER dataset in around 29 minutes without I/O and only need up to 35 minutes with disk I/O.

In the third row of \Figref{FigBoxingEval}, we show number of boxes used as well as the total amount of memory copied for provisioning as a multiple of the TrieArray input size from \Figref{FigTableData}. We see that the number of boxes is generally below 100 unless the memory is restricted to below 25\%; similarly, we never copy more than 15x of the input data even for a 5\% memory restriction. An example for how the boxes were chosen for the TWITTER data set is shown in \Figref{FigBoxingFig}. Each figure shows the front (x-y) plane of the 3-D input space. Darker pixels stand for more data of the represented area. We see that boxes become smaller around the more data-dense areas. See \appref{AppBoxesFig} for more details. 

\begin{figure}[h]
\subfigure[5\%]      {\includegraphics[width=0.159\columnwidth]{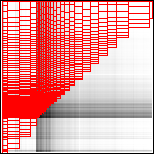}}
\subfigure[10\%]     {\includegraphics[width=0.159\columnwidth]{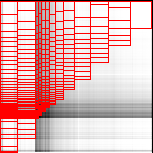}}
\subfigure[25\%]     {\includegraphics[width=0.159\columnwidth]{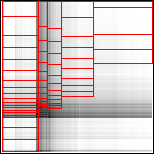}}
\subfigure[35\%]     {\includegraphics[width=0.159\columnwidth]{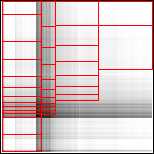}}
\subfigure[75\%]     {\includegraphics[width=0.159\columnwidth]{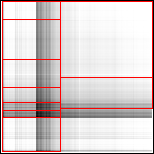}}
\subfigure[100\%]    {\includegraphics[width=0.159\columnwidth]{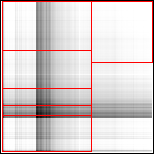}}
\vspace*{-4ex}
\caption{\label{FigBoxingFig} Selected boxes for TWITTER dataset}
\vspace*{-1ex}
\end{figure}

\begin{figure}
\subfigure[Limit: 25\% of input]     {\includegraphics[width=0.48\columnwidth]{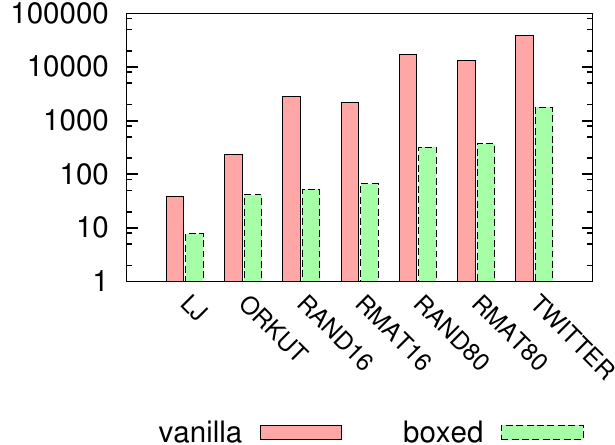}}
\subfigure[Limit: 35\% of input]  {\includegraphics[width=0.48\columnwidth]{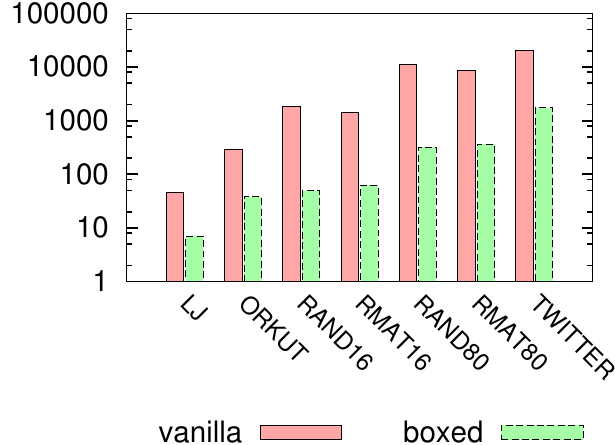}}
\vspace*{-2ex}
\caption{\label{FigVanillaVsBoxed}Vanilla vs.~Boxed LFTJ. \textnormal{For our LFTJ implementation based on TrieArrays. Y-Axis shows wall-clock runtime in seconds. Memory restricted as mentioned.}}
\end{figure}

Last, we are interested in how the boxed LFTJ compares to a variant without our extension. Since LFTJ as presented in \cite{Veldhuizen14} is a family of algorithms that needs to be parameterized by how data is physically stored and how the TrieIterator operations are implemented, answering this question is hard since conclusions for one specific implementation of the data back-end might not hold for another. In particular, our approach of storing data in huge arrays and performing mostly binary searches over them might be particularly bad from an I/O perspective. However, having these considerations in mind, we also ran our version of LFTJ with the \emph{cgroup} memory restrictions and a provisioning mode that does not copy the data but leaves it in memory-mapped files\footnote{We also experimented with this so-called \emph{lazy} provisioning for boxed LFTJ: here, lazy and eager show about the same performance; we omited the data for space reasons.}. The data is thus paged in (from the input file) by the Linux virtual memory system that using a standard replacement strategy. Results for this experiment are shown in \Figref{FigVanillaVsBoxed}: The average speed ratios of vanilla over boxed for the memory levels of 10\%, 25\%, and 35\% are 65x, 30x, and 20x, respectively.

\begin{figure}
\hfill\includegraphics[width=0.35\columnwidth]{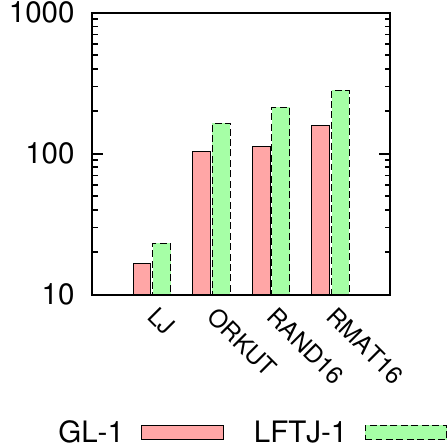}
\hfill\includegraphics[width=0.35\columnwidth]{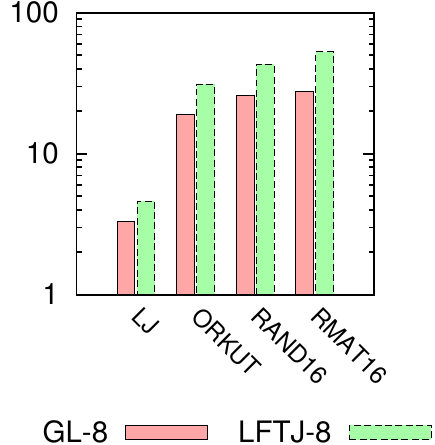}
\hfill\textcolor{white}{.}
\vspace*{-2ex}
\caption{\label{FigCompGraphLab} Performance Graphlab vs.~Boxed LFTJ for single and multi threaded configurations. No resource limitations. Y-axis shows runtime in seconds.}
\end{figure}

\myparaND{How does boxed LFTJ compare to \emph{specialized} best-in-class competitors for triangle listings?} We compare to (1) the triangle counting algorithm presented in Shank's dissertation \cite{schank2007algorithmic} which has been implemented for the graph analysis framework Graphlab \cite{Low2012GraphLab}. We chose this algorithm as our in-memory competitor since it supports multiple threads and was used in other comparisons \cite{Wu2014} before. We also (2) compare to the MGT algorithm by Hu, Tao, and Chung \cite{hu2013massive} as the (to the best of our knowledge) currently best triangle listing algorithm in the out-of-core setting. Our results are shown in \Figref{FigCompGraphLab} and \Figref{FigCompMGT}. The boxed LFTJ is on average 65\% slower than Graphlab, both when run in single-threaded mode as well as in multi-threaded mode with 8 threads. Graphlab, being optimized for an in-memory setting with optional distribution\footnote{which we did not evaluate}, was not able to run any of our large data sets getting ``stuck'' once all of the 32GB of main memory and 32GB of swap space had been consumed.

\begin{figure}
\subfigure[Limit: 10\% of input]{\includegraphics[width=0.48\columnwidth]{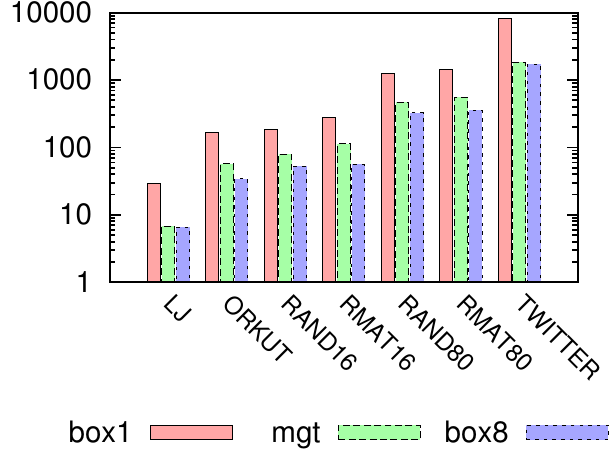}}
\subfigure[Limit: 25\% of input]{\includegraphics[width=0.48\columnwidth]{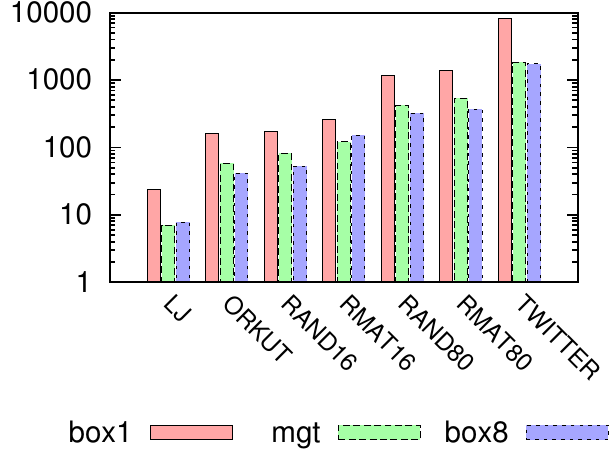}}
\vspace*{-2ex}
\caption{\label{FigCompMGT} Comparison: Boxed LFTJ (1,8 threads) vs. MGT (1 thread) with limited memory. Y-Axis shows wall-clock runtime in seconds.}
\end{figure}

Comparing to MGT (cf.~\Figref{FigCompMGT}): We used the cgroup-memory restrictions and cleaned caches for running MGT and boxed-LFTJ. When we run in single-threaded mode, then MGT outperforms boxed LFTJ by a factor of $3.1$, $2.9$, and $2.9$ in the configurations with 10\%, 25\%, and 35\% of the memory, respectively. Due to time constraints, we did not run LFTJ in single-threaded mode for other configurations. When we allow LFTJ to utilize all of the 4 available cores, we are on average 47\%, 22\%, and 28\%, respectively, faster than the single-threaded MGT. We have not investigated how well MGT parallelizes. Note that MGT internally uses only 32 bits as node identifiers (vs.~our 64bit identifiers). Nevertheless, we used the same values to configure and limit the amount of memory for both MGT and LFTJ.

\section{Related Work} \label{SecRelatedWork}

Related work spans multiple areas at different levels of generality. From most broad to more specific:

The SociaLite effort \cite{seo2013socialite, seo2013distributed} at Stanford also proposes to use systems based on relational joins (in this case Datalog) for graph analysis. They show that declarative methods not only allow for more succinct programs but are also competitive, if not outperform typical other implementations. We did not  compare our join performance with the SociaLite system as it is clearly more feature-rich; it is also Java-based which might or might not influence performance in ways orthogonal to our investigation. We note that the benchmarks presented in \cite{seo2013socialite} and \cite{seo2013distributed} that--among other queries--evaluate counting triangles did not use datasets as large as ours.

A worst-case optimal join algorithm has first been presented by Ngo et al.~in \cite{ngo2012worst} following the AGM bound \cite{atserias2008size} that bounds the maximum number of tuples that can be produced by a conjunctive join. Leapfrog Triejoin by Veldhuizen \cite{Veldhuizen14}, the join algorithm we are using, has been shown to be worst-case optimal as well (modulo a log-factor). In fact, \cite{Veldhuizen14} showed that Leapfrog Triejoin is worst-case optimal (modulo log-factor) for more fine-grained families of inputs. Our work, especially on the worst-case optimality for graphs with limited arboricity was inspired by the worst-case optimal results in \cite{Veldhuizen14}. A good survey and description of this class of worst-case optimal join algorithms is \cite{ngo2014skew}, where the authors not only describe the AGM bound and its application, but also the original NPRR algorithm and LFTJ.

Most recently, Khamis, Ngo, Re, and Rudra proposed so-called \emph{beyond-worst-case-optimal} join algorithms. Here, the performed work is not measured against a worst-case within a set family of inputs---but instead must be proportional to the size of a shortest proof of the results correctness. The idea was proposed by Ngo, Nguyen, Re and Rudra in \cite{ngo2014beyond}. Furthermore, \cite{KhamisNRR14} combines ideas from geometry and resolution transforming the algorithmic problem of computing joins to a geometric one. Following this line of research is very interesting as it might offer even better performance in practice. 

Our boxing approach is most closely related to the classic block-nested loop join (BNLJ)\cite{ramakrishnan2003database}. An interesting avenue for future work would be to investigate how optimizations and results for the BNLJ transfer to the \emph{multi-predicate} LFTJ.

Listing triangles in graphs is a well-researched area in computer science. For the in-memory context, see \cite{latapy2008main} for a recent survey. Triangle listing can also be reduced to matrix multiplication. Recent work that proposes new algorithms based on this approach is \cite{bjoerklund2014}. 
Chiba and Nishizeki \cite{chiba1985arboricity} propose an in-memory triangle listing algorithm that runs in $O(|E|\alpha(G))$ matching the best possible bound we give in \Secref{SecLFTJTriangle}. To the best of our knowledge, our insight that this is the best possible theoretical bound for this class of graphs, is novel and thus provides new insights about these algorithms. Earlier, \cite{Papadimitriou1981131} already showed that enumerating all triangles in planar graphs is a linear-time problem.

Triangle listing in the out-of-core context: Following up on the MGT work \cite{hu2013massive}, Rasmus and Silvestri investigate the I/O complexity of triangle listing \cite{pagh2013input}. They improve on the $I/O$ complexity of MGT from $O(|E|^2/(MB))$ to an expected $O(E^{3/2}/(\sqrt{M}B))$. They also give lower bounds and show that their algorithm is worst-case optimal by proving that any algorithm that enumerates $K$ triangles needs to use at least $\Omega(K/(\sqrt{M}B))$ I/Os. They also give a deterministic algorithm using a color coding technique. Investigating whether the techniques used could be generalized to general joins is a very interesting avenue for future work. Prior to \cite{hu2013massive},  \cite{menegola2010external} proposed an algorithm whith an I/O complexity of $O(|E|+|E|^{1.5}/B)$; furthermore \cite{dementiev2006algorithm} proposed an algorithm with an I/O complexity of $O(|E|^{1.5}/B \cdot \log_{M/B}(|E|/B))$. Cheng et al. \cite{cheng2011finding} study the general problem of finding maximal cliques. We did not benchmark against these algorithms since MGT dominated them by an order of magnitude.

Research has also been done to distribute triangle counting and other graph algorithms \cite{afrati2013enumerating,afrati2010optimizing}, and approaches that use the MapReduce framework \cite{park2014mapreduce,qin2014scalable,suri2011counting,PaghT12}.

\section{Conclusion} \label{SecConclusion}

For the well-studied problem of triangle listing, we have investigated how a \emph{general-purpose} \& worst-case optimal join algorithm compares against \emph{specialized} approaches in the out-of-core context. By using Leapfrog Triejoin, we were able to devise a strategy that not only allows for good theoretical bounds in terms of I/O and CPU costs but we also demonstrated very good performance: For very large input graphs of 1.2 billion edges and more, LFTJ counts triangles with a speed of 4 million input edges per second for uniformly random data; and performs a complete count of the 35 billion triangles in the twitter dataset in little over 25 minutes on a standard 4-core desktop machine \emph{while} limiting the available main memory to around 5GB. Our positive results can be interpreted as a confirmation for the database community's theme of creating systems to empower (domain-expert) users via declarative query interfaces while providing very good performance.

\mypara{Acknowledgements} We thank Ken Ross and Haicheng Wu for comments on an earlier draft; and we thank the anonymous reviewers for their comments. 

\bibliographystyle{abbrv}
\bibliography{references}  %

\balancecolumns
\cleardoublepage
\appendix

\section{Leapfrog Join and\\Leapfrog Triejoin} \label{AppLFTJAlgos}

\begin{algorithm}[H]
\caption{\label{AlgLFJ}Leapfrog join of $n$ unary relations \cite{Veldhuizen14} 
}
\begin{algorithmic}[1]
\Statex \textbf{in:} \ArrayOf \LinIters \Iters{0..n-1} %
\Statex \textbf{variables:} \Int \PVar, \Bool \AtEnd
\Procedure{lfj-init}{\Nothing}
   \If{any iterator in \Iters{0..n-1} is \CallParens{\AtEnd}} 
       \State \AtEnd \Gets \True \Comment{Some input is empty}
   \Else
       \State sort \Iters{0..n-1} increasingly by \CallParens{value}
       \State $\PVar \Gets 0$ ; \AtEnd \Gets \False
       \State \Call{lfj-search}{\,$\!$}
   \EndIf
\EndProcedure
\Procedure{lfj-search}{\Nothing}
   \While{\True}
      \If{\Iters{$\PVar - 1 \mod n$ }.\CallParens{atEnd}}
         \State \AtEnd \Gets \True %
         \State \textbf{return} \Comment{No tuple can be found anymore}
      \EndIf
      \State \MaxVal \Gets \Iters{$\PVar - 1 \mod n$ }.\CallParens{value}
      \State \MinVal \Gets \Iters{\PVar}.\CallParens{value}
      \If{\MinVal = \MaxVal}
         \State \textbf{return} \Comment{Found tuple in intersection}
      \Else
         \State \Iters{\PVar}.\Call{seek}{\MaxVal}
         ; $\PVar \Gets \PVar + 1 \mod n$
      \EndIf
   \EndWhile
\EndProcedure
\Procedure{lfj-next}{\Nothing}
   \State \Iters{\PVar}.\CallParens{next} ; 
   $\PVar \Gets \PVar + 1 \mod n$
   \State \CallParens{lfj-search}
\EndProcedure
\Procedure{lfj-seek}{val}
   \State \Iters{\PVar}.\Call{seek}{val} ; 
   $\PVar \Gets \PVar + 1 \mod n$
   \State \CallParens{lfj-search}
\EndProcedure
\Function{lfj-value}{\Nothing}: 
   \textbf{return} \Iters{0}.\CallParens{value}
\EndFunction
\Function{lfj-atEnd}{\Nothing}:
   \textbf{return} \AtEnd
\EndFunction
\end{algorithmic}
\end{algorithm}
\begin{algorithm}
\caption{\label{AlgLFTJ}Leapfrog Triejoin with $n$ variables \cite{Veldhuizen14}}
\begin{algorithmic}[1]
\Statex \textbf{in:} \ArrayOf \LeapFrogJoinSet \Lfs{1..n} %
\Statex \textbf{variables:} \Int \DVar \Comment{Current depth}
\Procedure{lftj-init}{\Nothing}: $\DVar \Gets 0$
\EndProcedure
\Procedure{lftj-open}{\Nothing}
   \State $\DVar \Gets \DVar + 1$
   \ForAll{$\IterVar \textnormal{ used in } \Lfs{\DVar}$}: \IterVar.\CallParens{open}
   \EndFor
   \State \Lfs{\DVar}.\CallParens{lfj-init}
\EndProcedure
\Procedure{lftj-close}{\Nothing}
   \ForAll{$\IterVar \textnormal{ used in } \Lfs{\DVar}$}: \IterVar.\CallParens{close}
   \EndFor
   \State $\DVar \Gets \DVar - 1$
\EndProcedure
\Procedure{lftj-next}{\Nothing}: \Lfs{\DVar}.\CallParens{lfj-next} \EndProcedure
\Procedure{lftj-seek}{v}: \Lfs{\DVar}.\Call{lfj-seek}{v} \EndProcedure
\Function{lftj-value}{\Nothing}: \textbf{return} \Lfs{\DVar}.\CallParens{lfj-value} \EndFunction
\Function{lftj-atEnd}{\Nothing}: \textbf{return} \Lfs{\DVar}.\CallParens{lfj-atEnd} \EndFunction
\end{algorithmic}
\end{algorithm}

\subsection{TrieIterator Example} \label{AppTrieNavigation}

A TrieIterator is initialized to the root node $r$. Methods for vertical navigation are: \Method{open()} for moving ``down'' to the first children of the current node and \Method{close()} for moving ``up'' to the parent of the current node. Horizontally, movement is restricted to direct siblings, which are accessed via the \emph{LinearIterator} interface that comprises the methods \Method{atEnd}, \Method{next}, \Method{seek}, and \Method{value}.
It is convenient to think of the $l$ children of a node $n$ to be stored increasingly sorted in an array $A$ of size $l$. The methods \Method{bool atEnd()} returns \texttt{true} if the iterator is positioned after the last element (eg., at position $l$). The method \Method{next()} requests to move to the next element; \Method{atEnd} will be true if the iterator was at the last position already (e.g., calling \Method{next()} at position $l-1$). The method \Call{seek}{T $v$} can be used to forward-position the iterator to the element with value $v$; if $v$ is not in $A$, then the iterator is placed at the element with the smallest value $w > v$, or \Method{atEnd} if no such $w$ exists. 
Finally, data is accessed at granularity of a single domain element via the method \Method{T value()}, which returns the element at the current 
position.
The methods \Method{open}, \Method{next}, \Method{seek}, and \Method{value} may only be called if \Method{atEnd()} is \texttt{false}; furthermore, the value $v$ given to \Method{seek(v)} must be at least \Method{value()}; and \Method{value()} must not be called at the root node $r$.

\begin{example}[TrieIterator Navigation] See Figure~\ref{FigTrieTrieArrayB}. The iterator is initially positioned at r; \Method{open()} moves it to a, followed by \Method{next()} to b. Here, \Method{open()} moves to u; \Method{next()} to v; and a \Method{seek(\textnormal{w})} will position the iterator to z since z is the smallest among u,v,z which is larger than w. A call to \Method{next()} causes \Method{atEnd()} to return \texttt{true} after which \Method{close()} would be the only allowed operation, moving the iterator back to b. 
\end{example}

\subsection{Leapfrog TrieJoin Procedure}
Given a join description as a Datalog rule body with $m$ atoms and $n$ variables. For each of the $m$ atoms, a single TrieIterator is created. Furthermore, LFTJ maintains an array of $n$ Leapfrog joins---one join for each variable. The LFJ for variable $x_i$ uses \emph{pointers} to the TrieIterators for atoms that mention the variable $x_i$. %
Overall, LFTJ is implemented as a TrieIterator itself\footnote{The actual join results are collected by walking the Trie.} (see \Algref{AlgLFTJ}). A variable \DVar\ remembers at which level of the output trie the iterator is positioned. The horizontal navigation methods manipulate \DVar, open and close the appropriate TrieIterators, and initialize the Leapfrog joins. The linear iterator methods are then simply delegated to the LFJ which computes the appropriate intersections.

\newpage
\section{Proofs}
\begin{figure}[H]
\hfill
\subfigure[\label{FigThreshingRel}Graph $G$]{
\textcolor{white}{.}$
    \begin{array}{rr}
    \mathbf{E:} \\
    \hline
    \hline
    0, & 24 \\
    1, & 20 \\
    2, & 16 \\
    3, & 12 \\
    4, & 8 \\
    5, & 4 \\
    \hline
    6, & 24 \\
    7, & 20 \\
    .. & ..\\
    \hline
    18, & 24 \\
    .. & ..\\
    23, & 4\\
    \hline
    24, & 24
    \end{array}
$\textcolor{white}{.}
}\hfill
\subfigure[\label{FigThreshingTable}\lftjtri steps when running on $G$]{
\textcolor{white}{.}$
\begin{array}{rrr|rr||r}
a & b & c & D(a) & D(b) & \mathbf{I/O} \\
\hline
\hline
0  &    &    & 24  &      &   \\
0  & 24 &    & 24  & 24   & 1 \\
0  & 24 & 24 & 24  & 24   & 1  \\
\hline                      
1  &    &    & 20  &      &   \\
1  & 20 &    & 20  & 16   & 1 \\
1  & 20 & -  & 20  & 16   & 1  \\
\hline                      
2  &    &    & 16  &      &   \\
2  & 16 &    & 16  &  8   & 1 \\
2  & 16 & -  & 16  &  8   & 1  \\
\hline                      
3  &    &    & 12  &      &   \\
3  & 12 &    & 12  &  24  & 1 \\
3  & 12 & -  & 12  &  24  & 1  \\
\hline
.. & .. & .. & .. & .. 
\end{array}
$
\textcolor{white}{.}
}
\hfill\mbox{}
\vspace*{-2ex}
\caption{\label{FigThreshing} Example input graph that causes \lftjtri to use many I/Os. Parameters: $M=20$, $B=4$ and graph with $N=24$ edges.}
\end{figure}

\subsection{Proof for \propref{PropIOs}} \label{AppProofIOExample}

Consult \Figref{FigThreshingTable}. The variable assignments for $x \defEq a$, $y \defEq b$, and $z \defEq c$ as well as the corresponding neighbors $D(a)$ and $D(b)$ are shown. Each node in $V_1$ causes two block $I/O$s. Further, the block storing the node with id 24 of $V_1$ will be evicted when $x=5$ and $y=4$ or earlier, and the last block with $24$ is thus repeatedly loaded when $x=6$, $x=12$, and $x=18$.

Detailed proof sketch for general case: The outer loop in line 1 of \Algref{AlgFLTJTriangle} ranges from $0$ to $N$. For each value $x \defEq a$, we then join $a$'s neighbors with $V_1$ (line 2) to obtain bindings for $y$. Since each node $a$ has exactly one neighbor $b$, this essentially performs a lookup of $b$ in the first column of $E$. Now, since we spaced the second values in $E$ with a distance of $B$ apart, locating each $b$ within $E$ incurs at least one I/O. Also, since the second values in $E$ repeat in groups of size $T$, the blocks needed for the second group will have been evicted from memory before they are needed, resulting in a single I/O for each tuple in $E$. 
The last step is to intersect the neighbors of $a$ with the neighbors of $b$. In our TrieArray representation, this will incur another I/O.\footnote{When storing relations in B-Trees or as an array of lexicographically sorted tuples the single neighbor of $b$ might already be available once $b$ has been loaded. However, even the reduced I/O cost of at least $|E(G_N)|$ demonstrates thrashing.} $\qed$

\subsection{Proof for Proposition~\ref{propSimpleCPUComplexity}} \label{apppropSimpleCPUComplexity}
The bound on the runtime can easily be obtained from the worst-case optimality wrt.\,input sizes of LFTJ (Corollary~4.3 in \cite{Veldhuizen14}) and the fractional edge-cover bound~\cite{atserias2008size}:
For any three binary relations $R$, $S$, $T$ the result size $|Q|$ of the join 
$Q(x,y,z) \leftarrow R(x,y),S(y,z),T(x,z)$ is limited according to the \emph{fractional edge cover} \cite{atserias2008size}. 
If the sizes of $R$, $S$, and $T$ agree than $|Q|$ is at most $n^{1.5}$ with $n=|R|=|S|=|T|$; adding the log-factor, we obtain the desired bound of $O(\log|E| |E|^{1.5})$.

The complexity is optimal modulo the log-factor since a graph with $|E|$ edges can have $\Omega(|E|^{1.5})$ triangles.
$\qed$

\subsection{Proof for \theoremref{ThmLftjcomplexityalpham}} \label{AppLftjalphabound}

Let $\hat\alpha$ be as required. We now analyze the work done by \lftjtri on a graph $G$ with its directed version $\directed{G}=(V,E)$ (possibly obtained via a $O(|E|\log|E|)$ preprocessing. Let $V_1$ be all nodes in $E$ that have an outgoing edge as usual. It is useful to also consult \Figref{AlgFLTJTriangle} for an explanation of which Leapfrog joins are executed during \lftjtri. We now count the steps at each variable:
\begin{itemize}
  \item At level $x$: We Leapfrog-join $V_1$ with itself yielding a bound of $O(|E|\log|E|)$ based on the requirements for the TrieIterator operations (see \Secref{SecLFTJcomplexityrequirement}).
  \item At level $y$: for each $x \in V_1$, a leapfrog-join is performed between $D(x)$ and $V_1$. As usual, $D(x)$ are the followers of $x$, i.e., $D(x) = \set{y \st (x,y) \in E}$. Summing up all cost and using that
the runtime of a leapfrog-join between two relations of size $s_1$ and $s_2$, respectively, is bound by $O(\log(\max\set{s_1,s_2})\cdot\min\set{s_1,s_2})$, we obtain:
        \[
        \begin{array}{llll}
           & O( & \sum\limits_{x\in V_1} \log(\max\set{d_x,|V_1|}) \min{\set{d_x, |V_1|}} & ) \\
          \subseteq & O( & \log|V| \cdot \sum\limits_{x\in V_1} |D(x)| & )\\
          \subseteq & O( & |E| \log|V| & ) \\
          \subseteq & O( & |E| \log|E| & )
        \end{array}
        \]
  \item At level $z$: Here, for (at most) each edge $(x,y)$ in $E$ we leapfrog join the neighbors of $x$ with the neighbors of $y$. We thus incur the work:
  \[
     \begin{array}{llll}
     & O(           & \sum\limits_{(x,y) \in E} \log(\max\set{d_x,d_y}) \min\set{d_x,d_y} & ) \\
     \subseteq & O( & \sum\limits_{(x,y) \in E}  \log(\max\limits_{v\in V}\set{d_v}) \min\set{d_x,d_y} & ) \\
     \subseteq & O( & \log(\max\limits_{x\in V}\set{d_x}) \cdot \sum\limits_{(x,y) \in E} \min\set{d_x,d_y} & ) \\
     \subseteq & O( & \log|E| \cdot \sum\limits_{(x,y) \in E} \min\set{d_x,d_y} & ) \\
     \end{array}
  \]
  As Lemma 2, Chiba and Nishizeki \cite{chiba1985arboricity} observed that for any graph $G=(V,E)$, the sum $\sum_{(x,y) \in E} \min\set{d_x,d_y}$ is bounded by $2\alpha(G)|E|$. 
  Since $\alpha(G) \leq \hat\alpha(|E|)$ and because $\hat\alpha$ is monotonically increasing, we can bound the work by $O(\log|E|\hat\alpha(|E|)|E|)$, finishing the proof.
\end{itemize}

\subsection{Proof for \theoremref{ThmFineGrainedOptimal}} \label{AppFineGrainedOptimal}

We first show:
\begin{lemma} \label{LemmaNumberoftriangles}
Let $\hat\alpha:\mathbb N \rightarrow \mathbb N^+$ be an arbitrary monotonically increasing, computable function. 
For any $m \in \mathbb N$ there exists a graph with $m$ edges and arboricity at most $\hat\alpha(m)$ with at least 
$\frac{2}{3}m\hat\alpha(m)-\frac{2}{3}m-\frac{4}{3}\hat\alpha(m)^3 -\frac{2}{3}\hat\alpha(m)^2$ triangles.
\end{lemma}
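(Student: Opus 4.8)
The plan is to exhibit, for each $m$, an explicit graph assembled from disjoint complete graphs that simultaneously meets the arboricity constraint and achieves the Chiba--Nishizeki triangle bound up to constants. Write $\alpha := \hat\alpha(m)$. The building block is the clique $K_{2\alpha}$ on $2\alpha$ vertices: it has $\binom{2\alpha}{2}$ edges, $\binom{2\alpha}{3}$ triangles, and arboricity exactly $\lceil 2\alpha/2\rceil = \alpha$ (classical, e.g.\ via the Nash-Williams formula \cite{NashWilliams01011964}, since the densest subgraph of $K_{2\alpha}$ is $K_{2\alpha}$ itself with density $\tfrac{2\alpha}{2}$). The ratio of triangles to edges of this block is $\binom{2\alpha}{3}/\binom{2\alpha}{2} = (2\alpha-2)/3$, which is precisely where the constant $\tfrac23$ in the statement comes from.

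For the construction I would set $p := \lfloor m/\binom{2\alpha}{2}\rfloor$ and $r := m - p\binom{2\alpha}{2}$, so that $0 \le r < \binom{2\alpha}{2}$. Take $p$ vertex-disjoint copies of $K_{2\alpha}$, together with one further component on fresh vertices that is an arbitrary subgraph of $K_{2\alpha}$ with exactly $r$ edges. By construction the resulting graph $G$ has exactly $m$ edges. Since the arboricity of a disjoint union is the maximum of the components' arboricities, and every component is a subgraph of $K_{2\alpha}$ (hence of arboricity at most $\alpha$, as arboricity is monotone under taking subgraphs), we obtain $\alpha(G) \le \alpha = \hat\alpha(m)$, as required. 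Note that only the numerical value $\hat\alpha(m)$ is used here; monotonicity and computability of $\hat\alpha$ are needed later for \theoremref{ThmFineGrainedOptimal}, not for this existence claim.

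For the count, the triangles contributed by the leftover $r$-edge component are nonnegative, so $G$ has at least $p\binom{2\alpha}{3}$ triangles. Using $p \ge m/\binom{2\alpha}{2} - 1$ together with $\binom{2\alpha}{3}/\binom{2\alpha}{2} = (2\alpha-2)/3$, I would bound
\[
  p\binom{2\alpha}{3} \;\ge\; \Big(\tfrac{m}{\binom{2\alpha}{2}} - 1\Big)\binom{2\alpha}{3} \;=\; \tfrac{2}{3}m\alpha - \tfrac{2}{3}m - \binom{2\alpha}{3}.
\]
The final step is the elementary estimate $\binom{2\alpha}{3} = \tfrac43\alpha^3 - 2\alpha^2 + \tfrac23\alpha \le \tfrac43\alpha^3 + \tfrac23\alpha^2$, valid for every $\alpha \ge 1$ since the gap equals $\tfrac23\alpha(4\alpha-1)\ge 0$. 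Substituting this turns the right-hand side into exactly the claimed $\tfrac23 m\hat\alpha(m) - \tfrac23 m - \tfrac43\hat\alpha(m)^3 - \tfrac23\hat\alpha(m)^2$.

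The argument is elementary once the building block is chosen, so there is no single hard obstacle; the points that need care are (i) pinning down that $K_{2\alpha}$ has arboricity exactly $\alpha$ and that this value is preserved under both disjoint unions and passage to subgraphs, and (ii) the edge-count bookkeeping with the floor, so that $G$ has precisely $m$ edges while the leftover component never inflates the arboricity. I would also remark that when $m < \binom{2\alpha}{2}$ (so $p=0$) the claimed lower bound is already nonpositive and hence trivially satisfied, so the only substantive regime is $m \ge \binom{2\alpha}{2}$.
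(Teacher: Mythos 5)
Your proposal is correct and follows essentially the same route as the paper: pack $\lfloor m/\binom{2\hat\alpha(m)}{2}\rfloor$ disjoint copies of $K_{2\hat\alpha(m)}$, pad with leftover edges, bound the arboricity via Nash--Williams, and lower-bound the triangle count by the same arithmetic (your $p \ge m/\binom{2\alpha}{2} - 1$ plus $\binom{2\alpha}{3} \le \frac{4}{3}\alpha^3 + \frac{2}{3}\alpha^2$ lands on exactly the paper's expression). The only cosmetic difference is that the paper pads with isolated single edges while you pad with an $r$-edge subgraph of $K_{2\alpha}$; both preserve the arboricity bound.
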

\begin{proof}
\mypara{Informal overview of technique} To get many triangles, we use fully connected graphs $K_k$; to stay under the arboricity limit, we choose $k$ appropriately; to get many edges, we just union many of these $K_k$ into the graph, and then filling up with singleton edges. The math works out to the above quantity.

\mypara{Formal proof} Let $\hat\alpha$ be as required. Fix an $m \in \mathbb N$. 
Let $k = 2\hat\alpha(m)$. Note that the fully connected graphs $K_k$ with $k$ nodes have $l=k(k-1)/2$ edges. 
We construct a graph $G$ by packing as many $K_k$ as we can fit into our ``$m$-edges budget'' and filling up the rest with unconnected edges:
Let $n = \left\lfloor m/l \right\rfloor$, let $G$ be the graph composed of $n$ instances of $K_k$ and $m-nl$ single edges 
not connected to anything else. To complete the proof, we show:
(1) The arboricity of $G$ is $\hat\alpha(m)$, and (2) $G$ has at least 
$\frac{2}{3}m\hat\alpha(m)-\frac{2}{3}m-\frac{4}{3}\hat\alpha(m)^3 -\frac{2}{3}\hat\alpha(m)^2$ triangles.

\textit{Showing (1).} The classic Nash-Williams result \cite{NashWilliams01011964} states that for any graph $G$, its arboricity $\alpha(G)$ is characterized by the maximum edge-node ratio among all its subgraphs:
\[
    \alpha(G) = \max\limits_{S \textnormal{ is subgraph of } G}\left\{ \left\lceil{ \frac{|E(S)|}{|V(S)|-1} } \right\rceil \right\}
\]
It can easily be verified that choosing a $K_k$ as subgraph maximizes the ratio. Thus, $\alpha(G) = \alpha(K_k) = \ceil{k/2} = \hat\alpha(m)$.

\textit{Showing (2)} As short-hand let $\alpha = \hat\alpha(m)$, and let $m' = nl$, which is the largest integer multiple of $l$ that is not larger than $m$. Each $K_k$ has ${k \choose 3 } = k(k-1)(k-2)/6 = l(k-2)/3$ triangles, and we have $n$ of them, totaling in 
\newcommand{\alp}{\alpha}
\[
\begin{array}{lll}
   nl(k-2)/3 & = \frac{1}{3}m'(k-2) \\  
   & \hfill \triangleright \; k = 2\alp \\
   & = \frac{2}{3}m'(\alp-1) \\
   & \hfill \triangleright \; m' \ge m - l + 1 \\
   & \ge \frac{2}{3}(m-l+1)(\alp-1) \\
   \\
   & = \frac{2}{3}(m\alp - m -l\alp + l + \alp - 1) \\
   \\ 
   & = \frac{2}{3}(m\alp - m -l(\alp + 1) + \alp - 1) \\
   & \hfill \triangleright \; l = 2\alp^2 - \alp \\
   & = \frac{2}{3}(m\alp - m - (2\alp^2 - \alp)(\alp + 1) + \alp - 1) \\
   & \\
   & = \frac{2}{3}(m\alp -m - (2\alp^3 + \alp^2 - \alp) + \alp - 1) \\
   & \\
   & = \frac{2}{3}(m\alp -m - 2\alp^3 - \alp^2 + \alp + \alp - 1) \\
   & \\
   & = \frac{2}{3}(m\alp -m - 2\alp^3 - \alp^2 + 2\alp - 1) \\
   & \\
   & = \frac{2}{3}(m\alp -m - 2\alp^3 -(\alp - 1)^2) \\
   & \hfill \triangleright \; \alp \ge 1\\
   & \ge \frac{2}{3}(m\alp -m - 2\alp^3 - \alp^2) \\
   & \\
   & =  \frac{2}{3}m\alp - \frac{2}{3}m - \frac{4}{3}\alp^3 -\frac{2}{3}\alp^2
\end{array}
\]
triangles as required. 
\end{proof}

\newcommand{\specialM}{\ensuremath{m^\star}}
\newcommand{\specialG}{\ensuremath{G^\star}}
\newcommand{\Alg}{\ensuremath{\mathcal A}\xspace}
We proof \theoremref{ThmFineGrainedOptimal} indirect. Let $\hat\alpha: \mathbb N \rightarrow \mathbb N^+$ be an arbitrary, monotonically increasing, computable function, not identical to 1, that is in $o(\sqrt n)$. 
And, let \Alg be an algorithm that lists all triangles in graphs $G=(V,E)$ with $\alpha(G) \leq \hat\alpha(|E|)$ in $o(|E|\hat\alpha(|E|))$ time. Let $T_\alpha(m) : \mathbb N \rightarrow \mathbb N$ be the maximal number of steps \Alg performs on any graph $G=(V,E)$ with $|E| \leq m$ and $\alpha(G) \leq \alpha$. 

Since \Alg runs in $o(|E|\hat\alpha(|E|))$ time: choose $\epsilon_0 = 1/16$ and let $m_0$ be such that for all $m \ge m_0$ we have:
\begin{equation} \label{eqUpperBound}
  T_\alpha(m) \leq \frac{1}{16} m\hat\alpha(m) \quad\quad \textnormal{ for all } m \ge m_0
\end{equation}

From $\hat\alpha \in o(\sqrt n)$: choose $\epsilon_1 = 1/\sqrt{8}$ and let $m_1$ such that for all $m \ge m_1$ we have 
$\hat\alpha(m) \leq \frac{1}{\sqrt{8}} \sqrt m$. 

Now, let $\specialM \in \mathbb N$ be a large enough number such that (1) $\specialM \ge 8$, (2) $\specialM \ge m_0$, (3) $\specialM \ge m_1$, and (4) $\hat\alpha(\specialM) \ge 2$. We can satisfy all conditions since $\alpha$ maps into $\mathbb N^+$, is monotonically increasing, and is not identical to 1. We apply \lemmaref{LemmaNumberoftriangles} with our $\hat\alpha$ for $\specialM$, and conclude there is a graph $\specialG$ with $\specialM$ edges and arboricity at most $\hat\alpha(\specialM)$ with at least $s(\specialM) = \frac{2}{3}\specialM\hat\alpha(\specialM)-\frac{2}{3}\specialM - \frac{4}{3}\hat\alpha(\specialM)^3 - \frac{2}{3}\hat\alpha(\specialM)^2$ triangles. Clearly, \Alg needs to take at least $s(\specialM)$ steps on $\specialG$. Thus:
\[
\begin{array}{llll}
   T_\alpha(\specialM) & \ge & \frac{2}{3}\specialM\hat\alpha(\specialM)-\frac{2}{3}\specialM -\frac{4}{3}\hat\alpha(\specialM)^3  -\frac{2}{3}\hat\alpha(\specialM)^2 \textnormal{\textcolor{white}{xx}}\\
   && \hfill \triangleright \; \hat\alpha(\specialM) \leq \frac{1}{\sqrt{8}} \sqrt{\specialM} \\
   & \ge & \frac{2}{3}\specialM\hat\alpha(\specialM)-\frac{2}{3}\specialM -\frac{1}{6} \hat\alpha(\specialM)\specialM - \frac{1}{12}\specialM \\ \\
   & \ge & \frac{1}{2}\specialM\hat\alpha(\specialM)-\frac{3}{4}\specialM  \\
   && \hfill \triangleright \; \hat\alpha(\specialM) \ge 2 \\ 
   & \ge & \frac{1}{2}\specialM\hat\alpha(\specialM)-\frac{3}{8}\specialM\hat\alpha(\specialM) \\
   && \hfill \\ 
   & \ge & \frac{1}{8}\specialM\hat\alpha(\specialM)  \\
   && \hfill \triangleright \; \specialM \ge 8, \hat\alpha(\specialM) \ge 2 \\ 
   & \ge & \frac{1}{16}\specialM\hat\alpha(\specialM) + 1 \\
   &&  \\ 
   & > & \frac{1}{16}\specialM\hat\alpha(\specialM) \hfill \textnormal{contradicts } \eqref{eqUpperBound} $\qed$ \\
\end{array}
\]

\section{Systems Aspects}

\subsection{Caches and Limiting Resident Memory} \label{AppendixCaches}

\myparaNN To clear Linux file caches we used as root:
\begin{verbatim}
 sync && echo 3 > /proc/sys/vm/drop_caches
\end{verbatim}

\myparaNN We restricted the memory that a process uses for any reason (data, heap, program, caches, etc) using Linux \texttt{cgroups}. Investigating later via top, confirms that only the allowed resident memory is used by the process. We used as root commands such as:
\begin{verbatim}
 # create a group
 mkdir -p /sys/fs/cgroup/memory/limit_mem
 # add process to group
 echo $PID_OF_PROCESS \ 
    > /sys/fs/cgroup/memory/limit_mem/tasks
 # limit memory
 echo $LIMIT_BYTES \
    > /sys/fs/cgroup/memory/limit_mem/\
      memory.limit_in_bytes
\end{verbatim}
\section{More details for Fig.~8} \label{AppBoxesFig}

The input space for \lftjtri is 3-dimensional. We box for \PredsT{1} = $\set{E(x,y),E(x,z)}$ and \PredsT{2} = $\set{E(y,z)}$. Since there were no spills, intervals for dimension $z$ are always $[\NegInf\!\cdot\!\cdot\Inf]$. The figures show how these boxes are created by projecting  the 3-D input space onto the x-y plane. Darker pixels indicate areas where there is more data.
In particular, the image was created as follows: For $E(x,y)$ of the directed graph for the twitter dataset which can be viewed as a point-set in 2D space, create a 2D histogram $H$ with 150x150 bins. Then, because we slice along the first dimension and collect the nodes plus their neighbors, we aggregate over $H$'s second dimension (eg, $y$) values to obtain a 1D histogram $D$ showing the total number of neighbors the nodes in a certain bin have. We then spread this 1D histogram into a 2D space by setting the value at position $x,y$ to $D(x) + D(y)$. This ``image'' is indicative of the total amount of data for a rectangular box. As a last step, we equalize the histogram and map into greyscale to have a prettier picture. The red boxes are then drawn on top according to the made provisioning decisions during the boxing procedure. In the picture the $x$-axis goes from bottom left to bottom right, the $y$ axis from bottom-left to top-left---the same way as in \Figref{FigExampleGraphTiling}. Note, that the number of columns corresponds to how often we need to load the input data at level $y$.

\end{document}